
\documentclass{article}

\usepackage{microtype}
\usepackage{graphicx}
\usepackage{subcaption}
\usepackage{booktabs} 

\usepackage{hyperref}

\usepackage{xurl} 




\usepackage[accepted]{icml2026}

\usepackage{amsmath}
\usepackage{amssymb}
\usepackage{mathtools}
\usepackage{amsthm}

\usepackage[capitalize,noabbrev]{cleveref}

\theoremstyle{plain}
\newtheorem{theorem}{Theorem}[section]

\newtheorem{lemma}[theorem]{Lemma}

\theoremstyle{definition}

\theoremstyle{remark}

\usepackage[textsize=tiny]{todonotes}

\usepackage{mathtools}
\usepackage{multirow}
\usepackage{physics}
\newcommand{\changefont}[1]{{\fontfamily{qcr}\selectfont#1}}

\newcommand{\x}{\boldsymbol{x}}
\newcommand{\U}{\boldsymbol{u}}
\newcommand{\Z}{\boldsymbol{z}}

\icmltitlerunning{Practical and Scalable Hamiltonian Monte Carlo Without the Metropolis Test}

\begin{document}

\twocolumn[
  \icmltitle{Practical and Scalable Hamiltonian Monte Carlo Without the Metropolis Test}



  \icmlsetsymbol{equal}{*}

  \begin{icmlauthorlist}
    \icmlauthor{Jakob Robnik}{UCB}
    \icmlauthor{Reuben Cohn-Gordon}{UCB}
    \icmlauthor{Uroš Seljak}{UCB,LBNL}
  \end{icmlauthorlist}

  \icmlaffiliation{UCB}{Department of Physics, University of California, Berkeley, CA 94720, USA}
  \icmlaffiliation{LBNL}{Physics Division, Lawrence Berkeley National Laboratory, Berkeley, CA 94720, USA}

  \icmlcorrespondingauthor{Jakob Robnik}{jakob\_robnik@berkeley.edu}
  
  \icmlkeywords{Markov Chain Monte Carlo, Hamiltonian Monte Carlo, Error Analysis}

  \vskip 0.3in
]



\printAffiliationsAndNotice{}  

\begin{abstract}
Hamiltonian Monte Carlo and underdamped Langevin Monte Carlo are leading methods for sampling from high-dimensional distributions with differentiable densities. Both rely on numerical integration, which introduces asymptotic bias in expectation estimates. This bias can be removed by adjusting the numerical integration with a Metropolis–Hastings (MH) step, at a cost of slower mixing and larger variance. Alternatively, we can trade bias for lower variance if we avoid the MH step and use an appropriate step size of integration. These unadjusted schemes have strong performance, especially in high-dimensional problems, but are rarely used due to the lack of automated step-size selection.
We propose an automatic tuning scheme that selects a step size to meet a user-specified asymptotic bias tolerance. The method is based on a relationship between energy error and bias which we establish. We rigorously analyze the method in the Gaussian setting and numerically extend the analysis to several non-Gaussian problems. Experiments on Bayesian inference and large-scale statistical physics models (with over one million parameters) show that, with our tuning, unadjusted methods consistently and significantly outperform adjusted counterparts.

\end{abstract}

\section{Introduction}
Sampling offers a way to compute expectation values  $\mathbb{E}_p [f] = \int p(\x) f(\x) d\x$,
where $f(\x)$ is some function of parameters $\x \in \mathbb{R}^d$ and $p(\x) = e^{- \mathcal{L}(\x)} / Z$ is a given probability density, often with $Z=\int e^{- \mathcal{L}(\x)} dx$ unknown. 
This is a key tool in many disciplines, from social science \citep{gelman2013bayesian}, to high energy physics \citep{duane_hybrid_1987}, computational chemistry \cite{leimkuhler_molecular_2015}, statistical physics \citep{leimkuhler_molecular_2015}, and machine learning \citep{BNN}.
Often a gradient $\nabla \mathcal{L}(\x)$ is available, either analytically, or via automatic differentiation \citep{griewank_evaluating_2008, margossian_review_2019}; examples appear in Bayesian statistics \citep{strumbelj_past_2024, carpenter_stan_2017}, machine learning \citep{baydin_automatic_2018}, lattice quantum problems \citep{gattringer_quantum_2010}, and cosmology 
\citep{campagne_jax-cosmo_2023, horowitz_differentiable_2025}.

Markov Chain Monte Carlo (MCMC; \citep{metropolis_equation_1953}) is a commonly employed class of sampling methods in which a Markov chain $\{ \x_i \}_{i = 1}^{n}$ is designed to have a stationary distribution $p(\x)$ (or close to $p$), so that the expectation value $\mathbb{E}_p [f]$ can be approximated 
by $\bar{f} = \frac{1}{n}\sum_{i = 1}^{n} f(\x_i)$. When a (smooth) gradient is available, Hamiltonian Monte Carlo (HMC; \citet{duane_hybrid_1987, neal_mcmc_2011, betancourt_conceptual_2018}) and underdamped Langevin Monte Carlo (LMC; \citet{horowitz_generalized_1991, leimkuhler_molecular_2015}) are state-of-the-art algorithms.
Despite their success \citep{strumbelj_past_2024}, MCMC often presents a computational bottleneck \citep{gattringer_quantum_2010, leimkuhler_molecular_2015, simon-onfroy_benchmarking_2025}, especially in the high-dimensional applications, forcing practitioners to resort to more approximate methods, such as the Laplace approximation \citep{millea2022marginal} or an ensemble Kalman filter \citep{houtekamer2005ensemble}.

\paragraph{Bias-variance tradeoff}
\begin{figure}
    \centering
    \includegraphics[width=\linewidth]{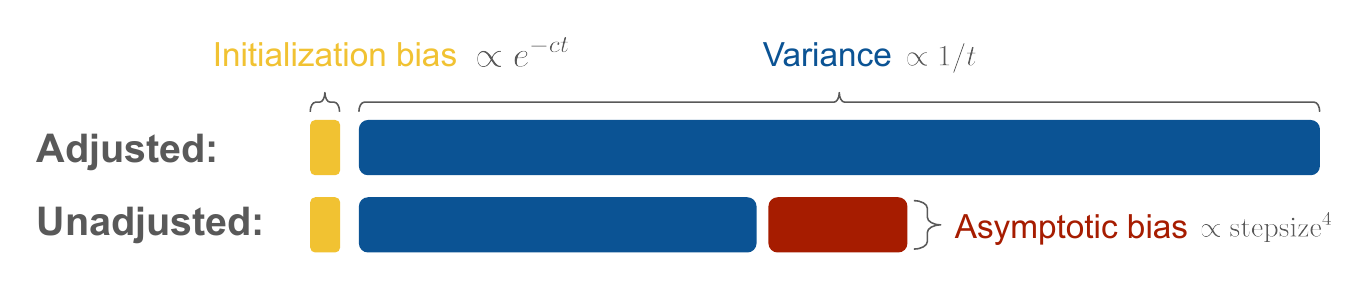}
    \caption{Graphical representation of the bias decomposition from Equation \eqref{eq: bias-variance}. Both adjusted and unadjusted methods have initialization bias, which typically decays exponentially fast for HMC-like algorithms \citep{margossian_for_2024} (with a rate denoted by $c$ in the Figure). Both methods also have the variance associated with the finite number of samples taken. It decays inversely proportionally to the number of samples $t$, with a proportionality constant that is determined by the autocorrelation time. Unadjusted methods additionally have asymptotic bias, which depends strongly on the step size $\epsilon$; as $\mathcal{O}(\epsilon^4)$, for the example from Appendix \ref{sec: bias-variance tradeoff}.}
    \label{fig:sketch}
\end{figure}
To understand the challenge faced by MCMC, consider the root mean squared error (RMSE) of the Monte Carlo estimator:
\begin{equation} \label{eq: rmse}
    \mathrm{RMSE}[\bar{f}] = \mathbb{E}_{MC}[(\bar{f} - \mathbb{E}_p[f])^2]^{1/2} ,
\end{equation}
where $\mathbb{E}_{MC}[\cdot]$ denotes the expectation with respect to the  initialization and Markov transition randomness.
The RMSE can be decomposed as
\begin{equation} \label{eq: bias-variance}
    \mathrm{RMSE}[\bar{f}]^2 = \mathrm{Bias}[\bar{f}]^2 + \mathrm{Var}[\bar{f}],
\end{equation}
where $\mathrm{Bias}[\bar{f}] = \mathbb{E}_{MC}[\bar{f} - \mathbb{E}_p[f]]$ and $\mathrm{Var}[\bar{f}] = \mathbb{E}_{MC}[(\bar{f} - \mathbb{E}_{MC}[\bar{f}])^2]$. 
The bias term arises either from the chain not yet reaching its stationary state and being influenced by its initial position (initialization bias), or because the stationary distribution of the chain $\tilde{p}(\x)$ does not equal the target distribution $p(\x)$ (asymptotic bias). Initialization bias is an important issue \citep{margossian_nested_2024}, but we will not study it here, noting that for long enough chains it can be eliminated by discarding initial samples \citep{margossian_for_2024}.
The variance term comes from the finite chain length and the correlation between the samples. This decomposition is illustrated in Figure \ref{fig:sketch}.

One of the main practical goals in designing MCMC methods is to obtain a desired RMSE at the lowest possible computational budget. This amounts to balancing the cost of generating samples, correlations between the samples, and the asymptotic bias. For example, the use of Metropolis-Hastings adjustment \citep{chib1995understanding} ensures that a chain satisfies detailed balance, so that the asymptotic bias vanishes, but for finite-length chains, it does not remove the variance. Noting that chains are finite in practice, one could negotiate the tradeoff differently and reduce the variance at the cost of introducing some bias. This strategy is the focus of the present work.

\begin{figure*}
    \centering
    \includegraphics[width=\linewidth]{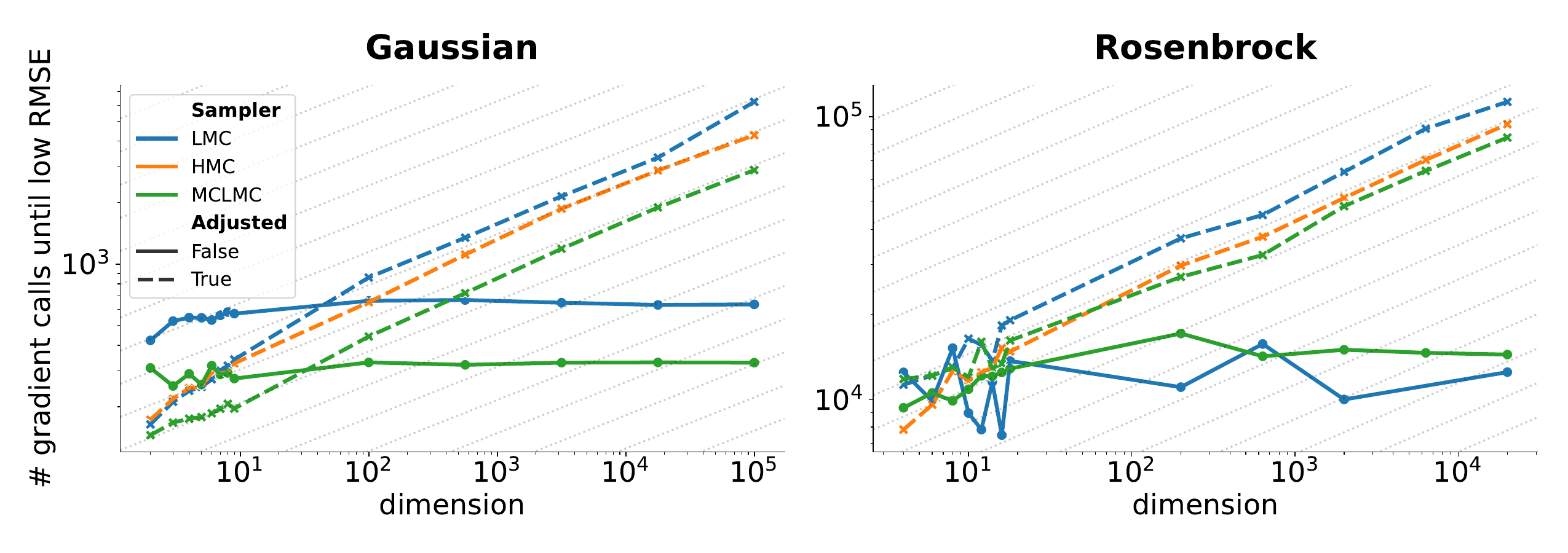}
    \caption{Sampling cost scaling with the dimensionality for product targets. Cost is measured by the number of gradient calls needed to achieve low error. MH adjusted methods are shown in dashed lines, and unadjusted methods are shown in solid lines. Step size for the unadjusted schemes is selected by the scheme proposed in this paper. $d^{1/4}$ power law grey lines are shown in the background. A Standard Gaussian target is shown on the left ($D=1$), and a product of banana-shaped Rosenbrock distributions ($D=2$) on the right. 
    In both cases, the cost of the MH adjusted methods scales as $d^{1/4}$, while the cost of the unadjusted method remains constant with dimension. 
    }
    \label{fig:scaling}
\end{figure*}

\paragraph{Illustrative example}
The shortcoming of performing MH is the scaling with the dimension that it implies for the sampler. To develop an intuition, we consider a simple problem, where the target is a product of $K$ independent $D$-dimensional distributions $q$:
\begin{equation*}
    p(\x) = \prod_{i=1}^{K} q(x_{i D}, x_{i D + 1} \ldots x_{i D + D - 1}),
\end{equation*}
where $D$ is a small number.
Suppose we are interested in the expectation value of a function of only one of those parameters, such as $f(\x) = x_1^2$.
We will measure the performance as the number of evaluations of $\nabla p(\x)$ (which is typically the bottleneck of computation and thus a proxy for the wall-clock time) that the sampler uses to get the relative RMSE, i.e. $\mathrm{RMSE}[\bar{f}] / \mathbb{E}_p[f]$ (averaged over 128 chains) below $10 \%$.

Figure \ref{fig:scaling} shows performance as a function of the dimension of the problem $d = K D$. As can be seen, the number of gradient calls scales as $d^{1/4}$ for Metropolis adjusted methods, in accordance with \citep{beskos_optimal_2013}. This is because even though the Hamiltonian and Langevin dynamics operate independently on each copy of the distribution, the energy change $\Delta_H$ (defined below), and therefore the MH acceptance probability, involves a sum over all parameters. Optimal performance requires a fixed acceptance rate \citep{beskos_optimal_2013, neal_mcmc_2011}, so to compensate for the increasing number of parameters, the step size needs to decrease, so that the integrator uses more gradient evaluations for a fixed trajectory length.
Unadjusted methods, by contrast, do not suffer from this scaling since they operate on each copy independently. They need to ensure that the bias is small, but this is independent of the number of parameters in these examples.

The example in Figure \ref{fig:scaling} is idealized, but a similar scaling has been observed for mean field models \citep{durmus_asymptotic_2023} and real problems, like cosmological field level inference \citep{simon-onfroy_benchmarking_2025}.
Thus, particularly for high-dimensional problems, unadjusted samplers may be significantly more efficient. Crucially, however, they are only viable if the step size can be chosen so that the asymptotic bias is small compared to the variance of the finite set of samples taken.



\paragraph{Our contributions} We propose an automated step size tuning scheme for unadjusted HMC and LMC, making them usable in a black-box manner, i.e. \emph{without manually tuning any hyperparameters}. We expect this to have a significant practical impact on applications that require gradient-based sampling in high dimensions.
The central idea is that the energy error that arises from integrating with step size $\epsilon$ provides a measure of the asymptotic bias, and step size can be adaptively varied in a short tuning phase to target an energy error which results in low asymptotic bias relative to the variance.
In Section \ref{sec: gauss} we show analytically for Gaussians that the energy error variance can be used to upper bound the bias.
In Section \ref{sec: numeric}, we numerically confirm the analytical results and show that the same upper bound generally extends to non-Gaussian targets.
In Section \ref{sec: algorithm} we use the bound to construct our tuning algorithm. In Section \ref{sec: experiments} we demonstrate its effectiveness on a range of standard benchmarks and on a real-world lattice field theory problem in more than 1 million dimensions.

\vspace{-0.3cm}
\section{Related Work}

HMC is a state-of-the-art Markov kernel for densities with smooth gradients, as is \emph{underdamped} Langevin Monte Carlo (LMC), also known as the generalized HMC. LMC should be distinguished from \emph{overdamped} Langevin dynamics and the corresponding Metropolis Adjusted Langevin Algorithm (MALA; \citet{rossky1978brownian}).
In HMC and LMC, each parameter $x_i$ has an associated momentum variable $u_i$. The state of the Markov chain $\x_n$ is used as an initial condition $\x(t=0) = \x_n$ of the dynamics, along with random initial values for the momenta $u_i(0) \sim \mathcal{N}(0, 1)$. The next state in HMC is then generated by solving for $\x(t = T)$, where $T$ is a predefined trajectory length
and $\Z(t) \equiv (\x(t), \U(t))$ are solutions of the Hamiltonian equations with Hamiltonian function
$\mathcal{H}(\Z) = \frac{1}{2} \left\| \U \right\|^2 +\mathcal{L}(\x)$:
\begin{align} \label{eq: Hamiltonian equation}
    \frac{d}{dt} \x(t) = \U(t) \qquad \frac{d}{dt} \U(t) = -\nabla \mathcal{L}(\x).
\end{align}
Note that the value of the Hamiltonian function, also called the energy, is a conserved quantity of the Hamiltonian equations, meaning that the energy difference $\Delta_H(\Z', \Z) = H(\Z') - H(\Z)$ vanishes for exact solutions of the Hamiltonian dynamics:
$\Delta_H(\Z(t), \Z(0)) = 0$.
In LMC, after every step with Hamiltonian dynamics, the velocity is partially randomized, which corresponds to the discretization of the Langevin stochastic differential equation \citep{leimkuhler_molecular_2015}.

In statistics, MH adjusted versions of these dynamics are used almost exclusively. However, 
unadjusted versions are used in some fields with high-dimensional distributions, such as Lattice quantum chromodynamics \citep{luscher_stochastic_2018, clark_asymptotics_2007} and Molecular Dynamics \citep{leimkuhler_molecular_2015}. Commonly, underdamped Langevin dynamics or Nosé-Hoover thermostat \citep{evans1985nose} are employed. 
Another promising option is Microcanonical Langevin Monte Carlo (MCLMC; \citet{robnik_microcanonical_2024, robnik_fluctuation_2024, minary_algorithms_2003-1, steeg_hamiltonian_2021}), which makes use of velocity norm preserving dynamics.
Domain knowledge (for example the relevant time-scales of the molecular bonds) and trial runs are used in these fields to select an appropriate step size. Sometimes a diagnostic that energy of the system should not show long term drifts is also used \citep{tuckerman2023statistical}.

Unadjusted methods have also been analyzed theoretically, establishing a bound on the asymptotic bias \citep{durmus_asymptotic_2023} and the mixing time \citep{bou-rabee_mixing_2023, camrud_second_2024}. For mean-field models, the bound on both is dimension-free \citep{bou-rabee_convergence_2023} and thus unadjusted methods provably outperform adjusted methods. However, these bounds assume some global knowledge of the distribution that is not available in practice. 
A notable gap in the above work is an algorithm for choosing the step size $\epsilon$. Without a principled way to do so, general practitioners cannot use unadjusted algorithms in a black-box fashion, which gravely limits their applicability.

\section{Measuring Asymptotic Bias} \label{sec:setup}

\paragraph{Origin of bias} Generically, Hamiltonian equations like \eqref{eq: Hamiltonian equation} cannot be solved exactly, so a numerical integrator like velocity Verlet \citep{leimkuhler_molecular_2015} is used to approximate it. 
Velocity Verlet is an example of a splitting method, where to solve the dynamics numerically, one first analytically solves for $\x$ at fixed $\U$ and vice versa. The first solution is called the position update and is given by $\Phi^{T}_{\epsilon}(\Z) = (\x + \epsilon \U, \U)$ for Equation \eqref{eq: Hamiltonian equation}, while the second is called the velocity update and is given by $\Phi^V_{\epsilon}(\Z) = (\x, \U - \epsilon \nabla \mathcal{L}(\x))$. The joint solution is then approximated by a composition of these maps, which for velocity Verlet is,
\begin{equation} \label{eq: LF def}
    \Z(t + \epsilon) \approx \Phi_{\epsilon}(\Z(t)) = (\Phi^V_{\epsilon / 2} \circ \Phi^T_{\epsilon} \circ \Phi^V_{\epsilon / 2})(\Z(t)).
\end{equation}

Due to this approximation, the stationary distribution $\tilde{p}(\x)$ of the sampler no longer equals the desired target distribution $p(\x)$ and expectation values acquire an asymptotic bias, which vanishes in the limit of step size going to zero \citep{durmus_asymptotic_2023}.
In the case of LMC, the update is additionally complemented by partial refreshments of the velocity \citep{leimkuhler_molecular_2015}:
\begin{equation} \label{eq: LMC def}
    \Z(t + \epsilon) \approx (\Phi^O_{\epsilon / 2} \circ \Phi_{\epsilon}\circ \Phi^O_{\epsilon / 2})(\Z(t)),
\end{equation}
where $\Phi^O_\epsilon(\Z) = (\x, \, e^{-\epsilon/L}\U+(\sqrt{1-e^{-2\epsilon/L}})\textbf{n})$ and $\textbf{n}\sim \mathcal{N}(0,I)$. The parameter $L$ determines the amount of momentum refreshment and plays a similar role as the trajectory length in HMC.




\paragraph{Summary statistics} When minimizing bias, it is important to determine the expectation with respect to which the bias is defined. For instance, an important expectation in Bayesian statistics (where quantification of uncertainty is key) is of the second moments, so that we are concerned with expectations of the form $\mathbb{E}_p[x_ix_j]$, or more generally the covariance matrix $\Sigma_p = \mathbb{E}_{q}[(\x - \mathbb{E}[\x])(\x - \mathbb{E}[\x])^T]$. 
For Gaussian distributions, the covariance matrix
contains all the information about the posterior, but even for non-Gaussian distributions, they are often used as a summary statistics.  

From a theoretical perspective, it may also be interesting to consider the bias of a wider set of functions, such as any Lipschitz-continuous functions.
In what follows, we consider a metric that controls the more general bias and a metric measuring the covariance matrix bias.

\paragraph{Covariance matrix error}
To quantify expectation value error of the second moments, we introduce a scalar measure of the covariance matrix error, which we define as
\begin{equation} \label{bsigma}
    b_{\mathit{cov}}^2(\Sigma_p, \Sigma_q) \equiv \frac{1}{d} \Tr{(I - \Sigma^{-1}_{p} \Sigma_q)^2},
\end{equation}
where $\Sigma_p$ is the true covariance matrix of the target distribution $p$ and $\Sigma_q$ is the covariance matrix of some other distribution $q$. In the simple case where the covariance matrices are diagonal, $b_\mathit{cov}$ is the relative error of the variance estimate, averaged over the parameters, i.e., $b_{\mathit{cov}}^2(\Sigma_p, \Sigma_q) = \frac{1}{d} \sum_{i=1}^d ([\Sigma_p]_{ii} - [\Sigma_q]_{ii})^2 / [\Sigma_p]_{ii}^2$.
This convergence metric is often used in practice \citep{grumitt_deterministic_2022, robnik_microcanonical_2024}. 
The diagonal form is preferable in high dimensions as it does not require storage of the full covariance matrix.
Nonetheless, we will measure error with \eqref{bsigma} when feasible
because it additionally penalizes the off-diagonal terms and has a number of nice properties: it is a divergence on the space of positive-definite matrices (meaning that it is non-negative, and zero if and only if the two matrices are the same), can be connected to the effecive sample size and is invariant to the linear change of basis of the parameter space. Proofs of these properties are provided in Appendix \ref{sec: bias def}.


\paragraph{Wasserstein distance} The Wasserstein distance $\mathcal{W}_{\nu}(p, q)$ between densities $p$ and $q$ is \citep{kantorovich_mathematical_1960}:
\begin{equation}
    \mathcal{W}_{\nu}(p, q) = \big( \mathrm{inf}_{\pi \in \Pi(p, q)} \int \left\| \x - \x'\right\|^\nu \pi(\x, \x') d\x d\x' \big)^{1/\nu},
\end{equation}
where $\Pi(p, q)$ is the set of probability densities on $\mathbb{R}^d \times \mathbb{R}^d$ with marginals $p$ and $q$.
Wasserstein distance has several nice properties: it is invariant to change of basis, is a metric on the space of distributions, and most importantly in the present context, it upper bounds the bias of Lipschitz-continuous functions. That is, for any Lipschitz-continuous function $f$, with Lipschitz constant $L$ (meaning that $\vert f(\x) - f(\x') \vert < L \left\| \x - \x'\right\|$ for any $\x, \x' \in \mathbb{R}^d$), the bias associated with this function is upper bounded by $\mathcal{W}_1(p, \tilde{p})$, which is in turn upper bounded by $\mathcal{W}_2(p, \tilde{p})$:
\begin{equation}
    \mathbb{E}_p[f] - \mathbb{E}_{\tilde{p}}[f] \leq L \mathcal{W}_1 (p, \tilde{p}) \leq L \mathcal{W}_2 (p, \tilde{p}).
\end{equation}
The first inequality is Kantorovich-Rubinstein duality \citep{villani_topics_2003}, and the second is a consequence of Jensen's inequality. We will focus on $\mathcal{W}_2$.

\paragraph{Energy error variance per dimension} Our goal is to control some measure of the bias. We will focus on $b_{\mathit{cov}}^2$, because it is of practical interest in various fields like Bayesian inference
and on $\mathcal{W}_2$, because it provides a bound on the bias of all Lipschitz continuous functions. We will do this by monitoring the energy error $\Delta_H(\Phi_{\epsilon}(\Z(t)), \Z(t))$ induced by numerical integration with step size $\epsilon$. Computing the energy error is a side product of the integration step, for example for HMC, the position update energy change is $\mathcal{L}(\x + \epsilon \U) - \mathcal{L}(\x)$ and $\frac{1}{2} \Vert \U - \epsilon  \nabla \mathcal{L}(\x) \Vert^2 - \frac{1}{2} \Vert \U \Vert^2$ for the velocity update. For all models of practical interest this incurs a negligible cost compared to the gradient evaluation, because $\mathcal{L}(\x)$ is a side product of the gradient evaluation.

We define the Energy Error Variance Per Dimension (EEVPD),
\begin{equation}
    \mathrm{EEVPD} = \mathrm{Var}_{\x \sim \tilde{p}, \U \sim N(0, I)}[\Delta_H(\Phi_{\epsilon}(\x,\U), (\x,\U))] / d.
\end{equation}
Crucially, this is a quantity that can easily be estimated in practice: $\x \sim \tilde{p}$, $\U \sim \mathcal{N}(0, I)$ is the stationary distribution of the chain, so computing $\mathrm{EEVPD}$ amounts to collecting the samples from the stationary chain, evaluating the one-step energy error for each of those samples and computing their variance. This can be done online, using a running average of the first and second moment, so that the step size $\epsilon$ can be adaptively varied to target a desired value of $\mathrm{EEVPD}$.


\section{Analytic Results for Gaussian Distributions} \label{sec: gauss}

We begin by showing that EEVPD can be used to control the asymptotic covariance matrix bias $b^2_{\mathit{cov}}(\Sigma, \tilde{\Sigma})$ and Wasserstein distance $\mathcal{W}_2(p, \tilde{p})$
for Gaussian target distributions, $p = \mathcal{N}(0, \Sigma)$.
The key tool is that the stationary distribution $\tilde{p}$ of unadjusted HMC and LMC for Gaussians is known exactly \citep{gouraud2025hmc} and is also Gaussian, with $\tilde{p} = \mathcal{N}(0,\tilde{\Sigma})$. $\tilde{\Sigma}$ has the same eigenvectors as $\Sigma$, but its eigenvalue associated to the $i$-th eigenvector is
\begin{equation} \label{eq: sigma}
    \tilde{\sigma}_{i}^2 = \frac{\sigma_{i}^2}{1 - \epsilon^2 / 4 \sigma_i^2},
\end{equation} 
where $\sigma^2_i$ is the corresponding eigenvalue of $\Sigma$.
Therefore the $\mathrm{EEVPD}$ has a closed form:

\begin{lemma} \label{lemma: eevpd}
    For a Gaussian distribution with covariance matrix eigenvalues $\{ \sigma_i^2 \}_{i=1}^d$ and an HMC or LMC sampler with a stable velocity Verlet integrator, meaning that step size $\epsilon < 2 \min_i{\sigma_i}$, 
    \begin{equation*}
        \mathrm{EEVPD} = \frac{1}{d} \sum_{i = 1}^d E(\epsilon^2 / \sigma_i^2),
    \end{equation*}
    where  $E(y) = \frac{y^3}{16 (1 - y/4)}$.
\end{lemma}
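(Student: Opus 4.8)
The plan is to reduce the computation to a single coordinate and then evaluate the variance of the one-step energy error for a one-dimensional Gaussian. Because the Hamiltonian is separable, $\mathcal{H}(\Z) = \frac{1}{2}\|\U\|^2 + \mathcal{L}(\x)$ with $\mathcal{L}(\x) = \frac{1}{2}\x^T\Sigma^{-1}\x$, and velocity Verlet applied to a separable Hamiltonian acts coordinatewise once we diagonalize $\Sigma$, the total energy error $\Delta_H(\Phi_\epsilon(\x,\U),(\x,\U))$ decomposes as a sum of independent contributions, one per eigendirection. Under the stationary distribution $\tilde p = \mathcal{N}(0,\tilde\Sigma)$ (whose eigenvalues are given by Equation \eqref{eq: sigma}) together with $\U\sim\mathcal{N}(0,I)$, these contributions are independent, so $\mathrm{EEVPD} = \frac{1}{d}\sum_{i=1}^d \mathrm{Var}[\Delta_H^{(i)}]$, and it suffices to show $\mathrm{Var}[\Delta_H^{(i)}] = E(\epsilon^2/\sigma_i^2)$.

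For the one-dimensional problem I would first write the velocity Verlet map in closed form. On a single mode with variance $\sigma^2$, $\mathcal{L}(x) = x^2/2\sigma^2$, so $\nabla\mathcal{L}(x) = x/\sigma^2$, and one step of \eqref{eq: LF def} is a linear map $M_\epsilon$ on $(x,u)$ that is the composition of the half velocity kick $(x,u)\mapsto(x, u - \tfrac{\epsilon}{2\sigma^2}x)$, the drift $(x,u)\mapsto(x+\epsilon u, u)$, and another half kick. Writing $y = \epsilon^2/\sigma^2$, this $2\times 2$ matrix has a clean form (its trace is $2 - y/2$, reflecting the stability condition $y < 4$, i.e. $\epsilon < 2\sigma$). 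The energy error is $\Delta_H = \frac{1}{2}(u'^2 - u^2) + \frac{1}{2\sigma^2}(x'^2 - x^2)$ where $(x',u') = M_\epsilon(x,u)$; substituting the entries of $M_\epsilon$ gives $\Delta_H$ as an explicit quadratic form in $(x,u)$, say $\Delta_H = \tfrac{1}{2}(x,u)\, Q\, (x,u)^T$ for some symmetric $2\times 2$ matrix $Q$ depending only on $y$ and $\sigma^2$. Next I would compute the variance of this quadratic form under the stationary Gaussian. If $(x,u)\sim\mathcal{N}(0,C)$ with $C = \mathrm{diag}(\tilde\sigma^2, 1)$ and $\tilde\sigma^2 = \sigma^2/(1 - y/4)$ from \eqref{eq: sigma}, then $\mathrm{Var}[\tfrac12 (x,u)Q(x,u)^T] = \tfrac12\Tr{(QC)^2}$ by the standard Gaussian-quadratic-form identity. (It is worth checking, as a sanity test, that $\mathbb{E}[\Delta_H] = \tfrac12\Tr{QC} = 0$, which must hold because $\tilde p$ is stationary; this also pins down the correct $\tilde\sigma^2$.) Carrying out the $2\times2$ trace and simplifying should collapse to $y^3/16(1 - y/4)$.

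The main obstacle is purely the algebraic bookkeeping: assembling $M_\epsilon$, forming $Q$, multiplying by $C$, and squaring, all while tracking the stability denominator $1 - y/4$ so that everything telescopes to the stated $E(y)$. None of the steps are conceptually hard, but there is real risk of sign or factor errors, so I would anchor the computation with the two checks mentioned (mean-zero energy error, and the leading behavior $E(y)\sim y^3/16$ as $y\to0$, matching the known $\mathcal{O}(\epsilon^3)$ energy-error scaling of a second-order symplectic integrator for a single mode, and hence $\mathrm{EEVPD} = \mathcal{O}(\epsilon^4 \langle\sigma^{-4}\rangle)$ consistent with the bias-variance discussion around Figure \ref{fig:sketch}). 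The separability/diagonalization argument and the Gaussian moment identity are the only ``ideas''; everything else is a short deterministic calculation that I would relegate to an appendix.
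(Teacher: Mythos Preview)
Your approach is essentially identical to the paper's: diagonalize to decouple into one-dimensional modes, write the velocity Verlet step as a $2\times 2$ linear map, express the one-step energy error as a quadratic form in $(x,u)$, and compute its variance under the stationary Gaussian $\mathcal{N}(0,\tilde\sigma^2)\otimes\mathcal{N}(0,1)$; the only cosmetic difference is that you invoke the trace identity $\mathrm{Var}[\tfrac12 z^T Q z]=\tfrac12\Tr{(QC)^2}$ where the paper expands the second and fourth Gaussian moments by hand. One small slip in your parenthetical sanity check: $\Delta_H=\mathcal{O}(\epsilon^3)$ implies $\mathrm{EEVPD}=\mathcal{O}(\epsilon^6)$, not $\mathcal{O}(\epsilon^4)$ (the $\epsilon^4$ in Figure~\ref{fig:sketch} refers to the asymptotic \emph{bias}, not the energy-error variance).
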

 
The key result of this section is then:

\begin{theorem} \label{theorem}
     For a Gaussian distribution $\mathcal{N}(0, \Sigma)$ with covariance matrix eigenvalues $\{ \sigma_i^2 \}_{i=1}^d$ and HMC or LMC sampler with a stable velocity Verlet integrator, meaning that the step size $\epsilon < 2 \min_i{\sigma_i}$, the covariance matrix bias is upper bounded by
    \begin{equation*}
        b^2_\mathit{cov}(\Sigma, \tilde{\Sigma}) \leq \varphi^{-1}(\mathrm{EEVPD}),
    \end{equation*}
    as long as $\mathrm{EEVPD} < 0.397$.
    Here,
    \begin{equation*}
        \varphi(x) = \frac{4 x^{3/2}}{(1 + x^{1/2})^2}.
    \end{equation*}
    Similarly, the Wasserstein distance between the target  and stationary distributions is upper bounded by
    \begin{equation*}
        \mathcal{W}_2(p, \tilde{p})^2 / d \leq \epsilon^2  \varphi_W^{-1}(\mathrm{EEVPD}),
    \end{equation*}
    as long as $\mathrm{EEVPD} < 6.75$. Here,
    $\varphi_W \equiv E \circ W^{-1}$ with 
    $W(y) = \frac{2 (1 - y/8 -\sqrt{1-y/4})}{y (1-y/4)}$.
    Both bounds are sharp and realized if and only if the target is isotropic, i.e. $\Sigma_p \propto I$.
\end{theorem}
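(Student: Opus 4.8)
The plan is to reduce both inequalities to one-dimensional statements in the rescaled step sizes $y_i \equiv \epsilon^2/\sigma_i^2$ and then apply Jensen's inequality. Since $\Sigma$ and $\tilde\Sigma$ are simultaneously diagonalizable with eigenvalues related by \eqref{eq: sigma}, the eigenvalues of $\Sigma^{-1}\tilde\Sigma$ are $(1-y_i/4)^{-1}$, hence
\[
 b^2_{\mathit{cov}}(\Sigma,\tilde\Sigma) = \frac{1}{d}\sum_{i=1}^d g(y_i), \qquad g(y) = \Big(\tfrac{y/4}{1-y/4}\Big)^2 .
\]
For the Wasserstein term I would use the closed form for $\mathcal{W}_2$ between centered Gaussians with commuting covariances, $\mathcal{W}_2(p,\tilde p)^2 = \sum_i (\sigma_i - \tilde\sigma_i)^2$, together with $\sigma_i^2 = \epsilon^2/y_i$ and \eqref{eq: sigma}; a short simplification of $\tfrac1y(1-(1-y/4)^{-1/2})^2$ yields $\mathcal{W}_2(p,\tilde p)^2/d = \tfrac{\epsilon^2}{d}\sum_i W(y_i)$ with $W$ as in the statement. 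By Lemma \ref{lemma: eevpd}, $\mathrm{EEVPD} = \tfrac1d\sum_i E(y_i)$, so all three quantities are dimension-averaged sums of the same one-variable functions evaluated at the $y_i$.

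Next I would establish the pointwise identities $E = \varphi\circ g$ and $E = \varphi_W\circ W$. The second holds by the definition of $\varphi_W$, once one checks that $W$ is a strictly increasing bijection of $(0,4)$ onto $(0,\infty)$. For the first, the clean route is the substitution $u \equiv (y/4)/(1-y/4)$, so that $g(y)=u^2$, $1-y/4=(1+u)^{-1}$ and $y=4u/(1+u)$; plugging into $E(y)=y^3/(16(1-y/4))$ gives $E = 4u^3/(1+u)^2$, and with $x=u^2=g(y)$ this is exactly $\varphi(x)=4x^{3/2}/(1+x^{1/2})^2$.

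The analytic core is the monotonicity and convexity of $\varphi$ and $\varphi_W$. Both are strictly increasing, so $\varphi^{-1}$ and $\varphi_W^{-1}$ are well-defined and increasing. Differentiating $\varphi$ twice (most transparently in $t=x^{1/2}$) gives $\varphi''(x)=(3-4t-t^2)/\big(t(1+t)^4\big)$, nonnegative precisely for $t\le\sqrt7-2$, i.e. for $x\le x_0 := 11-4\sqrt7$, with $\varphi(x_0)=\tfrac29(26\sqrt7-67)\approx 0.397$; a similar but longer computation shows $\varphi_W$ is convex up to the argument at which $E$ equals $27/4=6.75$, the threshold corresponding to the root $y=3$ of $y^3+27y-108=0$. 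Under the stated hypotheses the relevant arguments stay in this convex regime, where $\varphi^{-1}$ (resp. $\varphi_W^{-1}$) is concave and increasing, so Jensen's inequality gives
\[
 b^2_{\mathit{cov}} = \frac1d\sum_i \varphi^{-1}\!\big(E(y_i)\big) \le \varphi^{-1}\!\Big(\tfrac1d\sum_i E(y_i)\Big) = \varphi^{-1}(\mathrm{EEVPD}),
\]
and identically $\mathcal{W}_2(p,\tilde p)^2/d = \tfrac{\epsilon^2}{d}\sum_i \varphi_W^{-1}(E(y_i)) \le \epsilon^2\,\varphi_W^{-1}(\mathrm{EEVPD})$. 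Equality in Jensen holds iff all $E(y_i)$ coincide, i.e. iff all $\sigma_i$ are equal (isotropic target), which gives the sharpness claim.

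I expect the main obstacle to be the convexity analysis, particularly for $\varphi_W$, whose second derivative is an unwieldy rational function and for which isolating the exact threshold ($E=27/4$ at $y=3$) takes some care; by contrast the Jensen step is routine once convexity is in hand. A secondary point is the boundary bookkeeping: verifying that the stability condition $\epsilon<2\min_i\sigma_i$ (i.e. $\max_i y_i<4$) is exactly what keeps every term and every substitution ($u$, $t$, $W^{-1}$) well-defined, and that the argument degenerates gracefully when eigenvalues repeat.
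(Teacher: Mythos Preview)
Your proposal is correct and follows essentially the same route as the paper: reduce to averages of one-variable functions in $y_i=\epsilon^2/\sigma_i^2$, identify $E=\varphi\circ B$ (your $g$ is the paper's $B$) and $E=\varphi_W\circ W$, establish monotonicity and convexity of $\varphi$ and $\varphi_W$ on the stated ranges (with the same thresholds $x_0=11-4\sqrt{7}$, $\varphi(x_0)\approx 0.397$, and $y=3$, $E(3)=27/4$), and conclude by Jensen. The only cosmetic difference is that the paper applies Jensen directly to the convex $\varphi$ to get $\varphi(b^2_{\mathit{cov}})\le\mathrm{EEVPD}$ and then inverts, whereas you pass to the concave inverse $\varphi^{-1}$ first; these are equivalent formulations of the same inequality.
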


All the proofs are given in Appendix \ref{sec: proofs}. Note that the conditions on EEVPD are not a severe limitation in practice, see for example Table \ref{table: eevpd values}, where significantly lower values of EEVPD are used.

\section{Numeric Results for Non-Gaussian Distributions} \label{sec: numeric}
\begin{figure*}
    \centering
    \includegraphics[width=\linewidth]{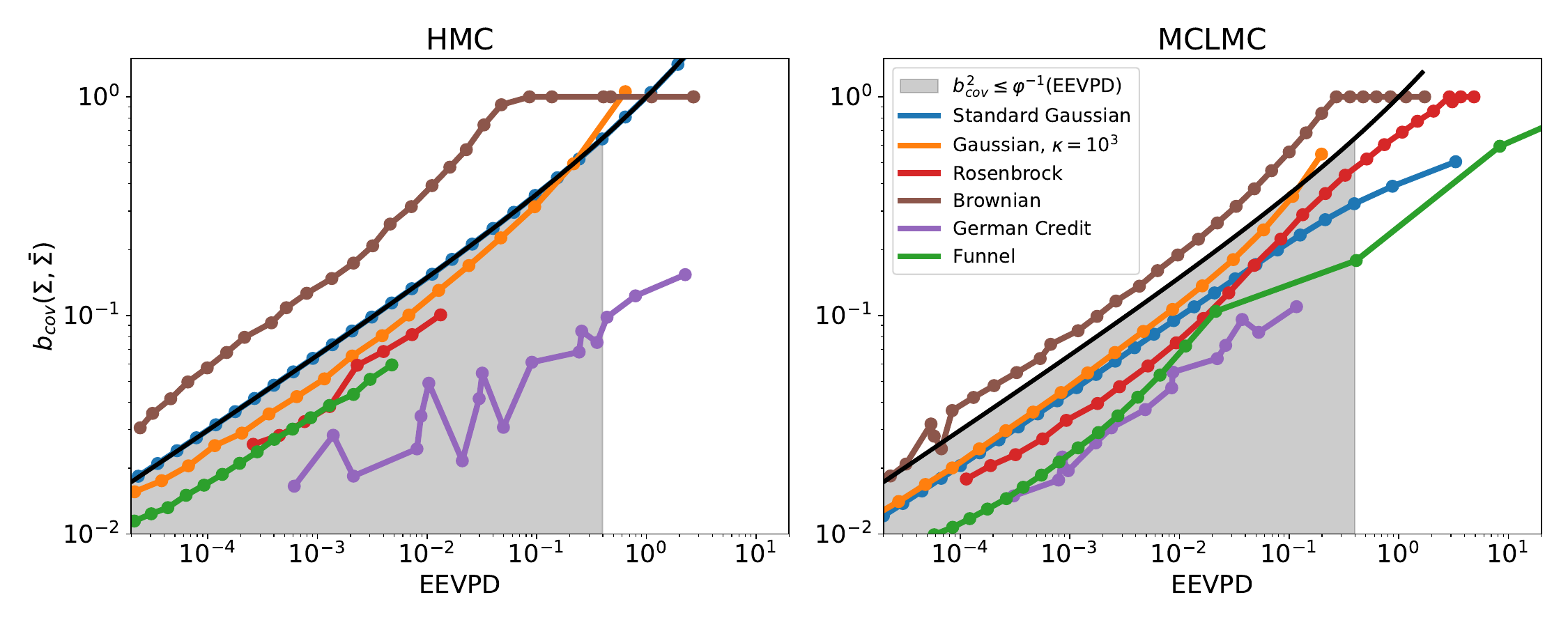}
    \caption{The asymptotic covariance matrix error $b_\mathit{cov}(\Sigma, \tilde{\Sigma})$ as a function of $\mathrm{EEVPD}$. Unadjusted HMC (left), and unadjusted MCLMC (right) are shown. The relation is shown for various problems from Section \ref{sec: numeric}. The analytical equality for the Standard Gaussians from Theorem \ref{theorem} is shown in black and agrees perfectly with the numerical results for HMC (left). The inequality for arbitrary Gaussian distributions is shown as a shaded grey region, and also applies perfectly for Gaussians up to EEVPD = 0.397, as per Theorem \ref{theorem}. We see that most targets abide by this inequality.
    }
    \label{fig: benchmarks}
\end{figure*}
We now examine the validity of Theorem \ref{theorem} for standard Bayesian inference benchmark problems. 
We focus on verifying the bound on $b_\mathrm{cov}$, due to the difficulty of numerically computing the Wasserstein distance for the problems considered here.
Even though the exact asymptotic error-$\mathrm{EEVPD}$ relation of Theorem \ref{theorem} applies only for HMC and LMC, unadjusted MCLMC has a notion of energy \citep{robnik_microcanonical_2024}, so we will also test MCLMC here.
Benchmark problem descriptions are in Appendix \ref{sec: benchmarks}. For each problem, we show the asymptotic value of $b_\mathit{cov}(\Sigma, \tilde{\Sigma})^2$ as a function of $\mathrm{EEVPD}$ in Figure \ref{fig: benchmarks}. Asymptotic $b_\mathit{cov}$ is computed by running unadjusted chains with different step sizes, each using $10^8$ gradient calls.
We eliminate the initial $10^4$ calls to eliminate the initialization bias and use the subsequent samples to compute the expectation values for the covariance matrix and $\mathrm{EEVPD}$. We monitor $b_\mathit{cov}(\Sigma, \tilde{\Sigma})^2$ from Equation \eqref{bsigma} and check that it has converged to the asymptotic value. If the convergence has not yet been achieved, i.e., the bias is still decaying, we do not show these measurements on the plots. This happens for some of the harder problems at small step sizes, where the chains are not long enough for the variance to become negligible. We have checked that the variation between the chains is negligible, but nonetheless average the results over 4 independent chains.

Numerical results for unadjusted HMC on Gaussians agree perfectly with Theorem \ref{theorem}: for the standard Gaussian, the equality holds, while for the Ill-conditioned Gaussian, the inequality holds for $\mathrm{EEVPD} < 0.397$ as per the theorem.
The inequality also applies to the majority of non-Gaussian benchmark problems, illustrating its broader applicability. 
One exception is the Brownian motion example, where it is off by approximately a factor of $1.5$ at small $\epsilon$, meaning that one would think one has $<2\%$ asymptotic error, when in fact it was $3\%$. 
The Rosenbrock and Funnel examples are only shown at small step sizes, because the problem becomes numerically unstable at higher step sizes, incurring divergences.
Very similar results also apply to MCLMC, except that the bias at a fixed $\mathrm{EEVPD}$ is usually lower for MCLMC. This suggests that the tuning scheme we develop for unadjusted HMC can also be applied to MCLMC, but a slightly larger $\mathrm{EEVPD}$ may be used.


\section{Automatic Tuning Scheme} \label{sec: algorithm}

We now present a tuning scheme that automatically determines the stepsize, trajectory length and a preconditioner of the unadjusted methods. After the tuning phase is finished, the hyperparameters are frozen and the sampling begins. 

\subsection{Stepsize}
The core findings of sections \ref{sec: gauss} and \ref{sec: numeric} are that by controlling $\mathrm{EEVPD}$, we can in turn control the asymptotic bias of expectations of interest. 
Our tuning scheme for step size is therefore straightforward: for any unadjusted sampler with an appropriate notion of energy,
we keep a running estimate of $\mathrm{EEVPD}$, and adaptively vary $\epsilon$ to target a desired value of $\mathrm{EEVPD}$ in a stochastic optimization scheme. In practice, an optimization algorithm such as dual averaging \citep{nesterov_primal-dual_2009, hoffman_no-u-turn_2014} can be used here, but the choice of optimizer has little effect on the results; we use the scheme in Appendix \ref{scheme}. 

Note that this approach is similar to how the step size is adapted in Metropolis adjusted methods where an acceptance rate is targeted instead of the EEVPD. Both quantities are based on energy error. However, note that the EEVPD value that we target does not correspond to the acceptance rate values that would be targeted in Metropolis adjusted methods. For example, if one would simply determine the stepsize based on the acceptance rate and then skip the MH step, the resulting method would have a nonzero bias, which would however typically (in high dimensions) be much smaller than the variance, making it suboptimal.

\paragraph{Speed of convergence} A running average estimate of EEVPD converges significantly faster than say, a running average of the second moments. This is because the energy changes in the subsequent steps are very mildly correlated. For example, integrated autocorrelation time for EEVPD time series with MCLMC dynamics is around 2 for standard Gaussian problem in $d = 100$ and around 5 for the Brownian motion problem from Appendix \ref{sec: experiments appendix}. Therefore, the variance of the EEVPD estimate is small, even for a relatively small number of samples. We observe that for the experiments considered in this work, the EEVPD converges to the desired value in an order of a few tens of steps. This is analogous to how the acceptance rate is cheap to estimate. 

Note that the EEVPD tuning needs to run for long enough that the initialization bias becomes negligible, just as the acceptance rate tuning in the adjusted algorithms.
The step size is continuously adapted throughout the burn-in such that the initial samples are quickly forgotten. The rate of forgetting is set by the parameter gamma (see Appendix \ref{scheme}) which we fix to ensure a decay time of 50 steps. This is typically short compared to the length of the burn-in, thus the initial samples do not affect the final step size.

\paragraph{Target EEVPD} It remains to select the desired bias tolerance and the corresponding desired $\mathrm{EEVPD}$. This choice depends on the application and the accuracy requirements; we, nonetheless, provide some guidance. It is clear that the asymptotic bias squared should be smaller than the mean squared error tolerance (because mean squared error is composed of the bias and the variance, which are both non-negative), but by how much? Appendix \ref{sec: bias-variance tradeoff} shows that for estimating $\mathbb{E}[x^2]$ of a standard normal distribution with finite chain length of unadjusted HMC, the optimal squared bias should be one fifth of the mean squared error tolerance. 
This suggests that EEVPD of $3 \times 10^{-4}$ should be used if $10 \%$ relative RMSE is required and $3 \times 10^{-7}$ if relative RMSE $1 \%$ is required. For convenience, a conversion table \ref{table: eevpd values} is provided in appendix.

\subsection{Trajectory length and diagonal preconditioning}
The second hyperparameter of unadjusted HMC is the number of steps between momentum refreshes, and for unadjusted LMC, the amount of noise in each partial refresh. 
Both can be understood in terms of a single hyperparameter, the \emph{momentum decoherence length} $L$. For HMC, $L$ is simply the number of steps in a trajectory times the step size, since this is precisely the length along the trajectory after which momentum completely decoheres. For LMC, $L$ is given in Section \ref{sec:setup}. In either case, we select $L$ based on the autocorrelation length of the the chain, as in \citep{robnik_microcanonical_2024, robnik_metropolis_2025}.
Finally, we do a short prerun to find a diagonal preconditioning matrix, as in \citep{robnik_microcanonical_2024}.
We find that the preconditioning matrix converges much faster with the unadjusted methods and therefore recommend using them for preconditioning, even when adjusted methods are later used for sampling. This is because
the bias in the preconditioner is acceptable even if one desires asymptotically unbiased samples.

\section{Experiments} \label{sec: experiments}

Our central contribution is a scheme which makes unadjusted HMC, LMC and MCLMC \emph{black-box samplers}, in the sense of requiring no manual tuning on the part of a user. It is therefore natural to compare performance to the state of the art black-box sampler, the No-U-Turn Sampler (NUTS; \citet{hoffman_no-u-turn_2014}). In addition, it is of interest to compare unadjusted samplers to their adjusted counterparts.

\begin{figure}[t]
\includegraphics[width= 0.45\textwidth]{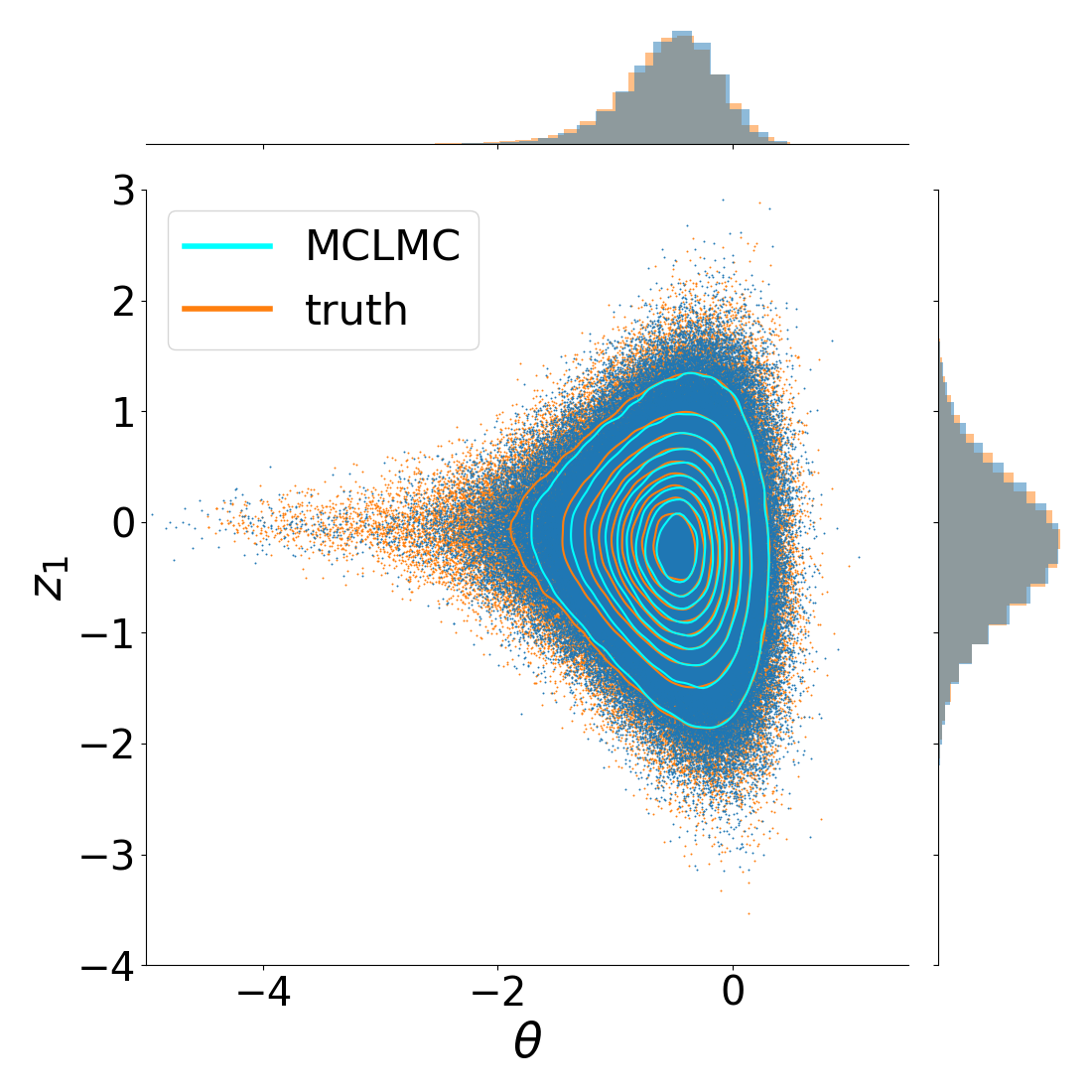}
{\caption{Posterior density for the funnel problem. The 2D marginal distribution in the $\theta - z_1$ plane and the corresponding 1D marginals are shown. The contours are obtained by kernel density estimation, and samples are shown as dots. The ground truth, obtained by a very long NUTS chain, is shown in orange, and unadjusted MCLMC is shown in blue. Both chains are $10^7$ samples long to eliminate the variance error. The two methods give practically indistinguishable posteriors, demonstrating that the discretization bias was successfully suppressed by the $\mathrm{EEVPD}$ control.}\label{fig: funnel}}
\end{figure}

\paragraph{Samplers}
With this in mind, we report results on NUTS, unadjusted LMC (uLMC), adjusted LMC (aLMC), in particular, the version proposed in \citep{riou-durand_metropolis_2023}, which is also known as Metropolis Adjusted Langevin Trajectories (MALT), unadjusted MCLMC (uMCLMC), and adjusted MCLMC (aMCLMC). 
We omit reporting of unadjusted and adjusted HMC, since the performance and implementation closely resemble that of LMC, and LMC is generally considered the preferred option \citep{riou-durand_metropolis_2023}.

\begin{table*}[]
    \centering
    {\begin{tabular}{ccc|cc|c|c}
    \toprule & \textbf{aLMC} & \textbf{uLMC} & \textbf{aMCLMC} & \textbf{uMCLMC} &  \textbf{uLMC}  & \textbf{NUTS} \\
    & \textit{grid search} & \textit{EEVPD} & \textit{acc. rate} & \textit{EEVPD} & \textit{grid search} & \textit{acc. rate} \\
    \midrule
    Standard Gaussian & 803 & \textbf{563} & 436 & \textbf{246} & 568  & 2391  \\ 
    Rosenbrock & 19,862 & \textbf{16,820} & 18,214 & \textbf{10,688} & 8410 & 27,070  \\
    Brownian & 4667 & \textbf{2168} & 2876 & \textbf{1628} & 2407 & 5334   \\
    German Credit & 7924 & \textbf{4730} & 6123 & \textbf{3960} & 4423 & 10,484  \\
    Item Response & 5234 & \textbf{2020} & 5470 & \textbf{1612} & 930 & 6944   \\
    Stochastic Volatility & \textbf{37,904} & 40,131 & 38,357 & \textbf{16,854} & 26,982 & 30,234 \\
    \end{tabular}}
    \caption{Number of gradient calls needed to get $b^2_\mathit{avg}$ below 0.01. Lower is better. Stepsize tuning method is denoted for each sampler.
    }
    \label{table:results}
\end{table*}

\paragraph{Tuning} In all cases we determine the step size by running a short warm-up chain. We take these steps as a burn-in, and initialize the chain with the final state returned by the tuning procedure. For tuning the unadjusted methods, we use the algorithm from Appendix \ref{scheme} and target $\mathrm{EEVPD}$
of $3 \times 10^{-4}$, corresponding to the RMSE = $10 \%$, which will be our notion of convergence. For MCLMC we use a slightly larger value of $5 \times 10^{-4}$, as suggested by Figure \ref{fig: benchmarks}. In Appendix \ref{ablation}, we perform an ablation study for LMC, to examine the change in performance as desired $\mathrm{EEVPD}$ is varied. The performance does not change much in a range of reasonable $\mathrm{EEVPD}$, and the value of $3 \times 10^{-4}$ is conservative, in the sense that larger values improve performance. The only exception is Stochastic Volatility, where we find that a smaller value is needed; we use $5\times 10^{-7}$ for HMC, and $2\times 10^{-8}$ for MCLMC.
For adjusted MCLMC we use the dual averaging algorithm \citep{nesterov_primal-dual_2009} from \cite{hoffman_no-u-turn_2014} to tune the step size to achieve an acceptance rate of $90 \%$.
NUTS is run with the BlackJax \citep{blackjax} window adaptation scheme.
While the other algorithms are evaluated with their respective tuning schemes, for aLMC, we perform a grid search to demonstrate that black-box uLMC outperforms even optimal aLMC. We perform a search over different values of trajectory length, and at each, choose $\epsilon$ to target an acceptance rate of $80\%$.

\paragraph{Evaluation metric} 
In addition to $b^2_{\mathit{cov}} (\Sigma, \Sigma_{\mathrm{sampler}})$ we consider a more standard metric of convergence: following \cite{hoffman_tuning-free_2022} we define the squared error of the expectation value $\mathbb{E}[f(\x)]$ as
\begin{equation} \label{bias}
    b^2(f) = \frac{(\mathbb{E}_{\mathrm{sampler}}[f] - \mathbb{E}[f])^2}{\mathrm{Var}[f]},
\end{equation}
and consider the average second-moment error across parameters, $b^2_{\mathit{avg}} \equiv d^{-1} \sum_{i=1}^d b^2(x_i^2)$. $b^2_{\mathit{avg}}$ can be interpreted as the accuracy equivalent to 100 effective samples \citep{hoffman_tuning-free_2022}. 
In typical applications, computing the gradients $\nabla \log p(\x)$ dominates the total sampling cost, so we take the number of gradient evaluations as a proxy of a wall-clock time. 
As in \citep{hoffman_tuning-free_2022}, we measure the sampler's performance as the number of gradient calls $n$ needed to achieve low error, $b^2_{\mathit{avg}} < 0.01$ or $b^2_{\mathit{cov}} < 0.01$. 
We avoid using the effective sample size from the chain autocorrelation \citep{gelman2013bayesian} because it would give unadjusted methods an unfair advantage, given that it only measures the Monte Carlo variance error, but not the asymptotic bias.

For each model we run at least 128 chains, and take the median of the error across chains at each step to reduce the uncertainty. We estimate the uncertainty of results by a bootstrapping procedure (see Table \ref{table:bias_comparison}).




\paragraph{Bayesian benchmarks} Table \ref{table:results} shows the results on a set of common benchmarks for Bayesian inference adapted from the Inference Gym \citep{inferencegym2020} and described in appendix \ref{sec: benchmarks}.  The corresponding results for the $b_{cov}$ metric are shown in Table \ref{table:results2} in the appendix.  The problems vary in dimensionality (36--2429) and are both synthetic (the first three problems) and with real data (the last three problems). We see that the unadjusted algorithms almost always perform better than their adjusted counterparts and better than NUTS. This is especially impressive since the hyperparameters of the adjusted samplers were found by grid search, or were shown to be near optimal in the case of aMCLMC \citep{robnik_metropolis_2025}.
Brownian motion was the only problem where EEVPD-based bound in Section \ref{sec: numeric} has failed, but here the unadjusted samplers nonetheless converge to the desired error and achieve close to optimal performance.
Tables \ref{table:results} and \ref{table:results2} also show unadjusted LMC with hyperparameters ($L, \epsilon$) determined by grid search. This shows that the performance of LMC with our tuning is close to optimal, except for Rosenbrock and Item response where performance is off by a factor of two, due to our conservative choice of $\mathrm{EEVPD}$ (see Appendix \ref{ablation}).

\begin{figure}
    \centering
    \includegraphics[width=\linewidth]{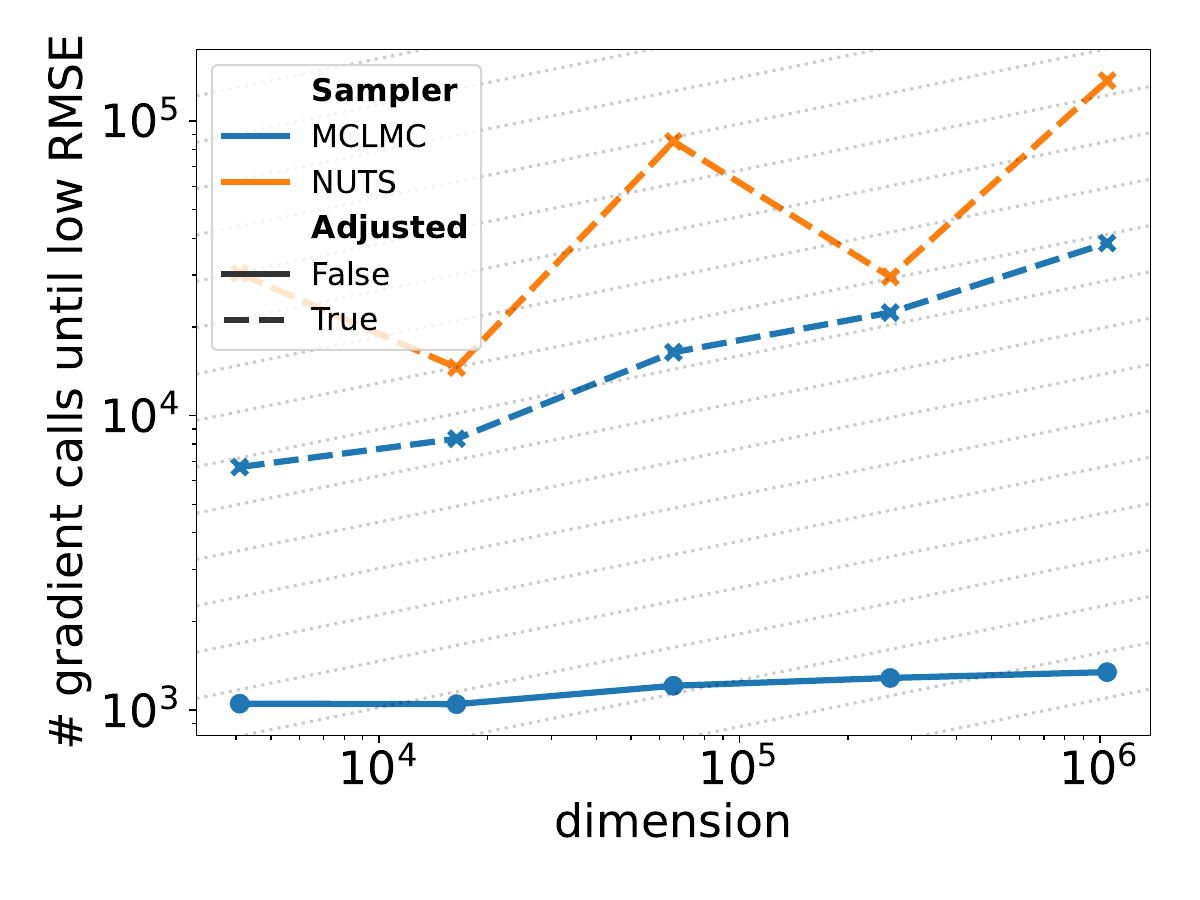}
    \caption{Performance on the $\phi^4$ model as lattice size is increased. Cost of the Metropolis adjusted MCLMC and NUTS grows with the number of parameters as $d^{1/4}$ (Grey lines in the background), while unadjusted MCLMC performance stays constant.}
    \label{fig:phi4scaling}
\end{figure}

\begin{figure}
    \centering
    \includegraphics[width=\linewidth]{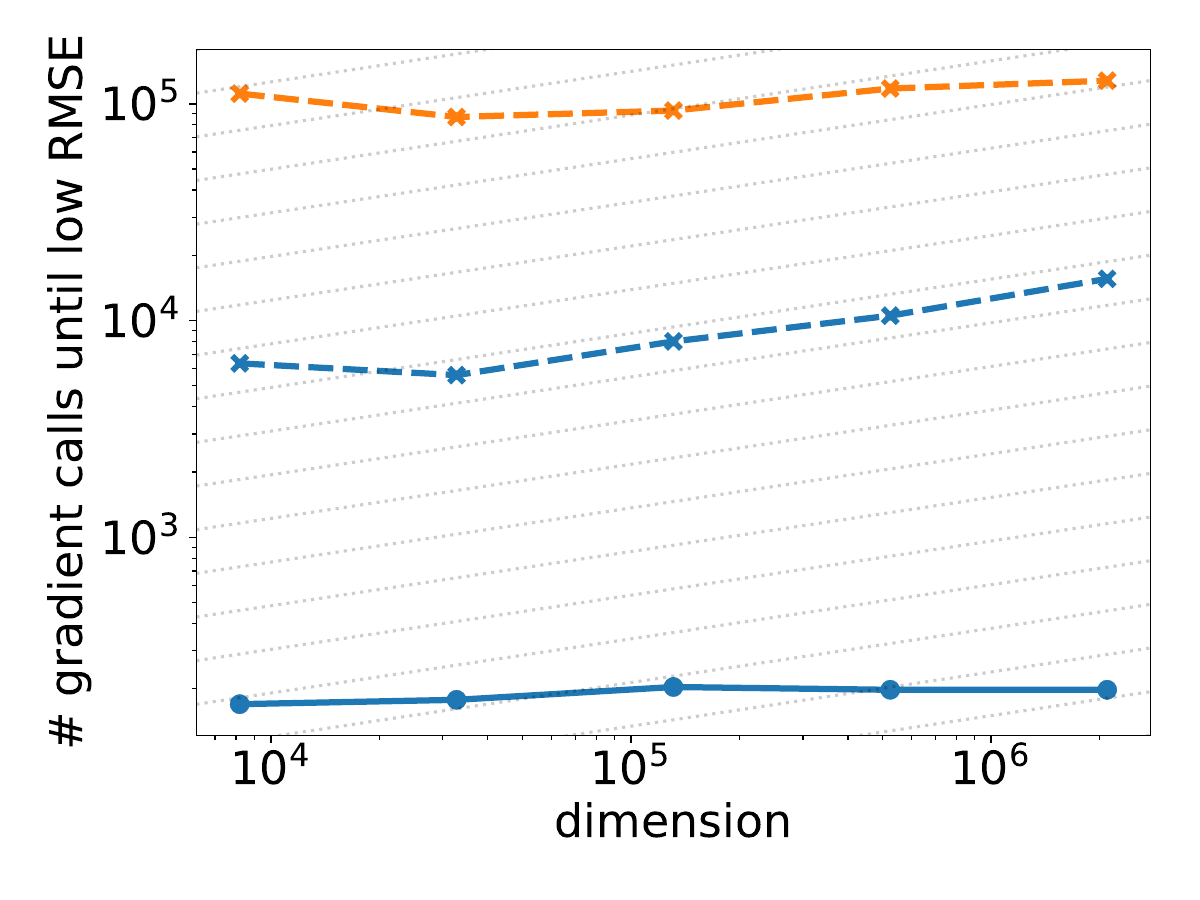}
    \caption{Performance on the U(1) lattice gauge theory model as lattice size is increased. Cost of the Metropolis adjusted MCLMC grows with the number of parameters as $d^{1/4}$ (Grey lines in the background), while unadjusted MCLMC performance stays constant. The legend is as in Figure \ref{fig:phi4scaling}.}
    \label{fig:U1scaling}
\end{figure}

\paragraph{Marginal posterior}
Figure \ref{fig: funnel} shows the marginal posterior density for the funnel problem, obtained by the unadjusted MCLMC, requiring the asymptotic bias of $1\%$.
The posterior is practically indistinguishable from a very long NUTS chain, demonstrating that our scheme produces accurate marginal posteriors with negligible discretization bias.

\paragraph{${\bf \phi^4}$ field theory} We now apply our scheme to a real-world problem from statistical physics, the $\phi^4$ model in $1024^2 > 10^6$ dimensions (see Appendix \ref{sec:phi4}).
We focus on the  unadjusted microcanonical sampler since it dominated in all the above experiments and compare it to the NUTS baseline. We find that NUTS obtains $b^2_{\mathrm{avg}} < 0.01$ in $133,266$ gradient calls. Adjusted MCLMC yields the same in $39,240$ calls, while unadjusted MCLMC remarkably only needs $1344$ gradient calls, a $100$ fold improvement over NUTS. 
Figure \ref{fig:phi4scaling} shows that this is as an exemplary case of how the unadjusted methods achieve a constant efficiency with varying dimensionality, here on a non-product, realistic problem.

\paragraph{U(1) gauge field theory} We test our scheme on a quantum field theory problem, the U(1) model in up to two million dimensions (for the model details see Appendix \ref{sec:U1}). We find that unadjusted schemes drastically outperform their adjusted counterparts: at two million dimensions the adjusted NUTS, HMC and MCLMC converge in 127,875, 17,699 and 15,578 gradients respectively, while unadjusted MCLMC converges in only 198 gradients.
Figure \ref{fig:U1scaling} again shows the flat dimensionality scaling of efficiency for unadjusted schemes.





\section{Limitations} 
The proposed scheme automatically selects a step size that in general performs well, both for Gaussian distributions and for a variety of non-Gaussian distributions with shapes ranging from a banana (Rosenbrock function) to the funnel structure (see Figure \ref{fig: funnel}) and Mexican hat type potentials ($\phi^4$ field theory). However, we have identified a model (Brownian motion) where the bias is larger than expected. Theoretical results for Gaussians suggest that this cannot be due to ill-conditionedness, if anything, ill-conditionedness should reduce the bias as non-standard Gaussian distributions have smaller bias (at fixed EEVPD) than the standard Gaussian distribution. In Appendix \ref{sec:tails} we show that long tails of the distribution also do not increase the bias. Furthermore, Brownian motion problem has a similar structure (hierarchical Bayesian problem) as German Credit and Funnel problems, both of which have smaller bias than the standard Gaussian. 

\paragraph{Initialization bias diagnostics} The situation should be compared to the issue of initialization bias in adjusted methods, where there is no way to prove that the bias has vanished, and instead, diagnostics are used to identify the unacceptable bias after the chain is already run. 
This is commonly done by running chains with different initialization and comparing their variances to calculate the Gelman-Rubin $\hat{R}$ \citep{gelman_inference_1992}. 

\paragraph{Discretization bias diagnostics} Similarly, we encourage the users to run diagnostics to identify cases where the discretization bias is non-negligible. This can be done by running chains with a different step size. One option is to independently run three chains with an equal number of steps: one with the step size as determined by the tuning scheme of Section \ref{sec: algorithm} and two with half of that step size. 
Running a few parallel MCMC chains for validation purposes is a common practice \citep{ margossian_for_2024}, even for Metropolis adjusted chains and is therefore not a drastic change to the standard MCMC workflow. 
The samples from the half step size chains are then combined so that their effective sample size is comparable to that of the full step size chain. One can then compare the expectation values computed with the half step size samples and the full steps size samples. Since the discretization bias typically strongly depends on the step size, this test can identify if the discretization bias affects the expectation values at an unacceptable level. 

\section{Conclusions}

We have shown that unadjusted gradient-based samplers can be turned into fully automatic black-box algorithms. With our tuning scheme, unadjusted HMC, LMC and MCLMC do not need manual tuning and consistently outperform adjusted methods like NUTS, especially in high dimensions.
Their gains are consistent across Bayesian inference benchmark problems and real world applications to quantum and statistical field theory problems.
We do not show any pure machine learning applications in this work as it is it controversial whether or not the Bayesian neural networks offer any advantage over the standard neural networks. 

Consistent performance makes unadjusted schemes a scalable alternative for a broad range of applications, from probabilistic programming to computational science, and opens new directions for efficient Bayesian inference.

\section*{Acknowledgements}
This material is based upon work supported in part by the Heising-Simons Foundation grant 2021-
3282 and in part by the U.S. Department of Energy, Office of Science, Office of Advanced Scientific Computing Research under Contract No. DE-AC02-05CH11231 at Lawrence Berkeley National Laboratory to enable research for Data-intensive Machine Learning and Analysis.

\section*{Impact Statement}

This paper presents work whose goal is to advance the field of Machine
Learning. There are many potential societal consequences of our work, none
which we feel must be specifically highlighted here.


\bibliography{references, references2}
\bibliographystyle{icml2026}

\newpage
\appendix
\onecolumn

\section{Covariance Matrix Error} 
\label{sec: bias def}

Let SPD($d$) be the space of symmetric positive definite matrices of size $d \times d$.

\begin{lemma}
    $b^2_{\mathit{cov}}(A, B) = \frac{1}{d} \Tr{(I - A^{-1} B)^2}$ is a divergence on SPD($d$), meaning that for all $A, B \in$ SPD($d$),
    \begin{enumerate}
        \item $b^2_{\mathit{cov}}(A, B) \geq 0$ 
        \item $b^2_{\mathit{cov}}(A, B) = 0$ if and only if $A = B$.
    \end{enumerate}
\end{lemma}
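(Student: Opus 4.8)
The plan is to diagonalize, while being careful that $A^{-1}B$ is in general not symmetric. Since $A \in \mathrm{SPD}(d)$, it has a unique symmetric positive-definite square root $A^{1/2}$, and one can write
\[
A^{-1}B = A^{-1/2}\left(A^{-1/2} B A^{-1/2}\right) A^{1/2},
\]
so $A^{-1}B$ is similar to $M := A^{-1/2} B A^{-1/2}$. The matrix $M$ is symmetric, and it is a congruence transform of the SPD matrix $B$, hence itself SPD; in particular it has real, strictly positive eigenvalues $\lambda_1,\dots,\lambda_d$, which are exactly the eigenvalues of $A^{-1}B$. Moreover $(I - A^{-1}B)^2$ is similar to $(I-M)^2$ via the same conjugation, and since the trace is invariant under similarity,
\[
b^2_{\mathit{cov}}(A,B) = \frac{1}{d}\Tr{(I-M)^2} = \frac{1}{d}\sum_{i=1}^d (1-\lambda_i)^2 .
\]

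From this identity the two claims follow quickly. Claim 1 is immediate, since the right-hand side is a sum of squares of real numbers. For claim 2, the direction $A=B \Rightarrow b^2_{\mathit{cov}}(A,B)=0$ is clear, because then $A^{-1}B = I$ and every $\lambda_i = 1$. For the converse, $b^2_{\mathit{cov}}(A,B)=0$ forces $\lambda_i = 1$ for all $i$; since $M$ is symmetric it is diagonalizable, and a diagonalizable matrix all of whose eigenvalues equal $1$ must be the identity, so $M = I$. Conjugating back (equivalently, multiplying by $A^{1/2}$ on both sides of $M = A^{-1/2}BA^{-1/2}$) gives $B = A$.

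I expect the only genuine obstacle to be the very first step: one must resist the temptation to treat $A^{-1}B$ as if it were symmetric. Exhibiting the similarity to the symmetric matrix $M$ is what licenses talking about "the eigenvalues of $A^{-1}B$" and rewriting the trace as a sum of squares; after that the argument is routine linear algebra. (The same computation also transparently yields the basis-invariance property claimed later, since under $A \mapsto PAP^T$, $B \mapsto PBP^T$ the matrix $M$ — and hence the multiset $\{\lambda_i\}$ — is unchanged.)
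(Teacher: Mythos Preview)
Your proof is correct, and it takes a genuinely different route from the paper's. The paper works in an orthonormal basis diagonalizing $A$, writes $R = I - A^{-1}B$ entrywise, and splits $\Tr{R^2}$ into a diagonal contribution $\sum_i (1 - A_{ii}^{-1}B_{ii})^2$ and an off-diagonal contribution $\sum_{i\neq j} A_{ii}^{-1}A_{jj}^{-1}B_{ij}^2$, showing each piece is nonnegative and vanishes only when $A=B$. Your approach instead symmetrizes via the square root, reducing everything to the generalized eigenvalues $\lambda_i$ of the pair $(A,B)$ and obtaining the clean identity $b_{\mathit{cov}}^2 = d^{-1}\sum_i(1-\lambda_i)^2$. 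Your route is arguably more conceptual (it identifies $b_{\mathit{cov}}^2$ as a spectral quantity of the pencil and makes positivity automatic), whereas the paper's is more elementary (no matrix square roots) and shows explicitly how diagonal and off-diagonal errors contribute.

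One small slip in your closing parenthetical: under $A\mapsto PAP^T$, $B\mapsto PBP^T$ with $P$ invertible but not orthogonal, the matrix $M=A^{-1/2}BA^{-1/2}$ is \emph{not} literally unchanged, because $(PAP^T)^{1/2}\neq PA^{1/2}$ in general. What is true is that $A^{-1}B$ is replaced by $P^{-T}(A^{-1}B)P^T$, a similar matrix, so the eigenvalue multiset $\{\lambda_i\}$ is preserved; that is enough for the invariance claim, but the stated reason should be adjusted.
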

\begin{proof}
    Without loss of generality, we may assume that $A$ is diagonal with positive entries, because the trace is invariant under the change of basis and $A$ must be diagonal in some basis because it is positive-definite. $R \equiv I - A^{-1} B$ then has $R_{ii} = 1 - A^{-1}_{ii} B_{ii}$ on the diagonal and $R_{ij} = - A^{-1}_{ii} B_{ij}$ off the diagonal. The trace is
    \begin{equation} \nonumber
        \Tr{R^2} = \sum_{i =1}^d (1 - A^{-1}_{ii} B_{ii})^2 + \sum_{i \neq j} A^{-1}_{ii} B_{ij} A^{-1}_{jj} B_{ji},
    \end{equation}
    where the first and the second term are the contribution from the diagonal and off-diagonal elements respectively. 
    Both terms are non-negative: the first term because it is a sum of squares, the second because all factors $A_{ii}$, $A_{jj}$ and $B_{ij} B_{ji} = (B_{ij})^2$ are non-negative.
    This already proves (1). 
    
    The implication $b^2_{\mathit{cov}}(A, A) = 0$ in (2) is trivial. To prove the other implication in (2), now suppose $b^2_{\mathit{cov}}(A, B) = 0$. This implies that both terms in the above equation are zero, as they are both non-negative. In the second term, $A_{ii} > 0$ are non-zero for any $i$, so the term can only be zero if $B_{ij} = 0$ for any $i \neq j$. The first term can only be non-zero if $A_{ii} = B_{ii}$ for any $i$. We have shown that $A_{ij} = B_{ij}$ for any $i, j$, thus $A = B$.
\end{proof}

Note that the above result is not obvious from the fact that we are computing a trace of the matrix squared. There are non-zero matrices whose square is zero.

$b^2_{\mathit{cov}}$ has another nice property -- it can be related to the effective sample size in a covariance matrix independent way:

\begin{lemma}
Let $\x^{(k)}$ for $k=1, 2 \ldots n$ be exact i.i.d samples from $p = \mathcal{N}(0, \Sigma)$ and let 
$\bar{\Sigma} =  \frac{1}{n} \sum_{k = 1}^n \x^{(k)}(\x^{(k)})^{T}
$,
be the empirical estimate for $\Sigma$. Then 
\begin{equation*}
    \mathbb{E}_{MC}[b^2_{\mathit{cov}}(\Sigma, \bar{\Sigma})] = (d+1) / n,
\end{equation*}
where $\mathbb{E}_{MC}[\cdot]$ is the expectation with respect to the sample realizations.
\end{lemma}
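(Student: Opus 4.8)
The plan is to reduce to the standard-normal case and then carry out a direct second-moment computation. Writing $\x^{(k)} = \Sigma^{1/2}\mathbf{y}^{(k)}$ with $\mathbf{y}^{(k)} \sim \mathcal{N}(0, I)$ i.i.d., we get $\bar{\Sigma} = \Sigma^{1/2}\bar{S}\Sigma^{1/2}$ where $\bar{S} \equiv \frac{1}{n}\sum_{k=1}^n \mathbf{y}^{(k)}(\mathbf{y}^{(k)})^T$, so $I - \Sigma^{-1}\bar{\Sigma} = \Sigma^{-1/2}(I - \bar{S})\Sigma^{1/2}$ is similar to $I - \bar{S}$. Since the trace is similarity-invariant (equivalently, using the basis-invariance of $b^2_{\mathit{cov}}$), it suffices to prove $\mathbb{E}_{MC}[\frac{1}{d}\Tr((I - \bar{S})^2)] = (d+1)/n$.

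Next I would expand $\Tr((I - \bar{S})^2) = d - 2\Tr\bar{S} + \Tr\bar{S}^2$ and take expectations term by term. The first term is deterministic, and $\mathbb{E}[\bar{S}] = I$ gives $\mathbb{E}[\Tr\bar{S}] = d$, so the only real work is the third term. Writing $\bar{S}_{ij} = \frac{1}{n}\sum_k \mathbf{y}^{(k)}_i \mathbf{y}^{(k)}_j$ we have $\Tr\bar{S}^2 = \sum_{i,j}\bar{S}_{ij}^2 = \frac{1}{n^2}\sum_{i,j}\sum_{k,l} \mathbf{y}^{(k)}_i \mathbf{y}^{(k)}_j \mathbf{y}^{(l)}_i \mathbf{y}^{(l)}_j$, and I would split the double sum over samples into the $n(n-1)$ off-diagonal pairs $k \neq l$ and the $n$ diagonal terms $k = l$.

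For $k \neq l$ independence factorizes the expectation into $\mathbb{E}[\mathbf{y}_i\mathbf{y}_j]^2 = \delta_{ij}$, which summed over $i,j$ contributes $d$ per pair; for $k = l$ I would use the Gaussian (Isserlis) fourth moments $\mathbb{E}[\mathbf{y}_i^4] = 3$ and $\mathbb{E}[\mathbf{y}_i^2\mathbf{y}_j^2] = 1$ for $i \neq j$, giving $\sum_{i,j}\mathbb{E}[(\mathbf{y}_i\mathbf{y}_j)^2] = 3d + d(d-1) = d^2 + 2d$ per term. Collecting, $\mathbb{E}[\Tr\bar{S}^2] = \frac{n(n-1)d + n(d^2+2d)}{n^2} = d + \frac{d(d+1)}{n}$, so $\mathbb{E}[\Tr((I-\bar{S})^2)] = d - 2d + d + \frac{d(d+1)}{n} = \frac{d(d+1)}{n}$, and dividing by $d$ finishes the proof. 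I do not expect a genuine obstacle here; the only thing requiring care is the combinatorial bookkeeping of the fourth-moment sum — keeping the $i = j$ versus $i \neq j$ cases and the $k = l$ versus $k \neq l$ cases separate — and, if one prefers not to invoke basis-invariance, making the reduction $\x^{(k)} \mapsto \Sigma^{-1/2}\x^{(k)}$ explicit at the outset as above.
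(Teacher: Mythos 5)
Your proof is correct and is essentially the same argument as the paper's: a direct second-moment computation using the Gaussian fourth-moment (Wick/Isserlis) identity. The only cosmetic difference is that you whiten to $\bar{S}$ first via basis-invariance and split the $k=l$ versus $k\neq l$ terms explicitly, whereas the paper keeps general $\Sigma$, records $\mathbb{E}_{MC}[\bar{\Sigma}_{ab}\bar{\Sigma}_{cd}]=\Sigma_{ab}\Sigma_{cd}+\tfrac{1}{n}(\Sigma_{ac}\Sigma_{bd}+\Sigma_{ad}\Sigma_{bc})$, and contracts with $\Sigma^{-1}$ at the end.
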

    \begin{proof}
    The empirical estimate is unbiased:
    \begin{equation*}
        \mathbb{E}_{MC}[\bar{\Sigma}] = \frac{1}{n} \sum_{k=1}^n \mathbb{E}_{MC}[\x^{(k)} (\x^{(k)})^T] = \Sigma,
    \end{equation*}
    and
    \begin{equation*}
        \mathbb{E}_{MC}[ \bar{\Sigma}_{ab} \bar{\Sigma}_{c d} ] = \Sigma_{ab} \Sigma_{cd} + \frac{1}{n} \big( \Sigma_{ac} \Sigma_{bd} + \Sigma_{ad} \Sigma_{bc}\big).
    \end{equation*}
    The expected error of the empirical covariance matrix is then:
    \begin{align*}
        &\mathbb{E}_{MC}[ b^2_{\mathit{cov}}]= 1 - 2 \frac{1}{d} [\Sigma^{-1}]_{ij} \mathbb{E}_{MC}[ \bar{\Sigma}_{ij} ] + \frac{1}{d} [\Sigma^{-1}]_{ij} \mathbb{E}_{MC}[ \bar{\Sigma}_{jk} \bar{\Sigma}_{li} ] [\Sigma^{-1}]_{kl} \\
        &= 1 - 2 + \frac{1}{d} [\Sigma^{-1}]_{ij}  \bigg( 
        \Sigma_{jk} \Sigma_{li} + \frac{1}{n} \big( \Sigma_{jl} \Sigma_{ki} + \Sigma_{ji} \Sigma_{kl}\big)
        \bigg) [\Sigma^{-1}]_{kl}  
        = \frac{d+1}{n},
    \end{align*}
    where Einstein's convention of summing over the repeated indices was used.
\end{proof}

This result implies that for Gaussian distributions, $b_{\mathit{cov}}^2$ could be used to define the effective sample size (ESS) of the estimate. Concretely, given some samples whose empirical covariance matrix error is $b_{\mathit{cov}}^2$, we can define their effective sample size to be $n_{\mathrm{eff}} = (d+1) / b_{\mathit{cov}}^2$, i.e., this is the number of exact samples that would yield the same covariance matrix error as given samples. For example, given a target distribution $p$ in $d = 9$, suppose it takes $1000$ samples from a Markov chain to achieve $b_{\mathit{cov}}^2 = 0.1$. This would correspond to $(9 + 1) / 0.1 = 100$ effective samples, so $0.1$ effective samples per step.    
Although this result rigorously only holds for Gaussian target distributions it offers some interpreteability to the value of $b_{\mathit{cov}}$.

Note that we could also take the more transparently non-negative definition $b_F^2(A, B) = \frac{1}{d} \Tr{(I -A^{-1}B) ((I -A^{-1}B)^T} = \frac{1}{d} \norm{I - A^{-1}B}_F^2$, where $\norm{\cdot}_F$ is the Frobenius norm. However, $b_F^2$ cannot be related to the effective sample size in a covariance matrix independent way. Instead,
\begin{align}
    \mathbb{E}_{MC}[ b_F^2 ] &= \frac{1}{d} \mathbb{E}_{MC}[\big(\delta_{ij} - [\Sigma^{-1}]_{ik} \bar{\Sigma}_{kj} \big) \big(\delta_{ij} - [\Sigma^{-1}]_{il} \bar{\Sigma}_{lj} \big) ] \\ \nonumber
    &= 1- 2 \frac{1}{d} [\Sigma^{-1}]_{ij} \mathbb{E}_{MC}[ \bar{\Sigma}_{ij} ] + \frac{1}{d} [(\Sigma^{-1})^2]_{ij} \mathbb{E}_{MC}[ \bar{\Sigma}^2]_{ji} = \frac{1}{n} \big(1 + \frac{1}{d} \Tr{\Sigma}\Tr{\Sigma^{-1}} \big),
\end{align}
so we will note use this definition here.

Finally, $b_{\mathit{cov}}$ is invariant to the linear change of basis

\begin{lemma}
    For an invertible matrix $A$, and a change of basis $\x' = A \x$, the covariance matrix error does not change, that is, $b_{\mathit{cov}}^2( \Sigma_{p(\x)}, \Sigma_{q(\x)}) = b_{\mathit{cov}}^2( \Sigma_{p(\x')}, \Sigma_{q(\x')})$.
\end{lemma}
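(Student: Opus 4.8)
The plan is to exploit the transformation behaviour of covariance matrices under the linear map $\x' = A\x$ and then reduce the claim to a similarity-invariance statement for the trace. Under $\x' = A\x$, the covariance matrix of any distribution transforms as $\Sigma_{p(\x')} = A\, \Sigma_{p(\x)}\, A^T$, and likewise $\Sigma_{q(\x')} = A\, \Sigma_{q(\x)}\, A^T$. So writing $P = \Sigma_{p(\x)}$ and $Q = \Sigma_{q(\x)}$, the quantity we must compare is $\frac1d \Tr\big[(I - P^{-1}Q)^2\big]$ against $\frac1d \Tr\big[(I - (APA^T)^{-1}(AQA^T))^2\big]$.

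The key step is the algebraic simplification $(APA^T)^{-1}(AQA^T) = (A^T)^{-1} P^{-1} A^{-1} A Q A^T = (A^T)^{-1} (P^{-1} Q) A^T = (A^{-1})^T M (A^{-1})^{-T}$ where $M = P^{-1}Q$. In other words, the matrix $I - P^{-1}Q$ in the new basis is conjugate (similar) to $I - P^{-1}Q$ in the old basis, via the invertible matrix $(A^T)^{-1}$: we have $I - (APA^T)^{-1}(AQA^T) = (A^T)^{-1}(I - M)A^T$. Squaring preserves the conjugation, so $(I - (APA^T)^{-1}(AQA^T))^2 = (A^T)^{-1}(I - M)^2 A^T$.

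The final step is cyclicity of the trace: $\Tr\big[(A^T)^{-1}(I-M)^2 A^T\big] = \Tr\big[(I-M)^2 A^T (A^T)^{-1}\big] = \Tr\big[(I-M)^2\big]$. Dividing by $d$ in both cases gives $b_{\mathit{cov}}^2(\Sigma_{p(\x')}, \Sigma_{q(\x')}) = b_{\mathit{cov}}^2(\Sigma_{p(\x)}, \Sigma_{q(\x)})$, which is exactly the claim. I do not anticipate any real obstacle here; the only thing to be slightly careful about is confirming the transformation rule $\Sigma \mapsto A\Sigma A^T$ (immediate from the definition $\Sigma_p = \mathbb{E}[(\x - \mathbb{E}\x)(\x - \mathbb{E}\x)^T]$ and linearity of expectation), and noting that $APA^T$ is indeed invertible so that $P^{-1}Q$ and its conjugate are both well-defined — both follow from invertibility of $A$ together with $P, Q \in \mathrm{SPD}(d)$. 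This is essentially a one-line proof once the conjugation identity is written down.
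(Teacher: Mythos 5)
Your proof is correct and follows exactly the same route as the paper's: transform the covariance matrices as $\Sigma \mapsto A\Sigma A^T$, observe that $I - (A\Sigma_p A^T)^{-1}(A\Sigma_q A^T)$ is similar to $I - \Sigma_p^{-1}\Sigma_q$ via conjugation by $A^T$, and conclude by cyclicity of the trace. No gaps; nothing further to add.
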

\begin{proof}
    The covariance matrix transforms as
    \begin{equation*}
        [\Sigma_{p(\x')}]_{ij} = \int x'_i x'_j p(\x') d\x' = [A \, \Sigma_{p(\x)} A^T]_{ij},
    \end{equation*}
    hence the covariance matrix bias is invariant:
    \begin{align*}
        b_{\mathit{cov}}^2( \Sigma_{p(\x)}, \Sigma_{q(\x)}) &= 
        \frac{1}{d} \Tr{ \big( I - (A \Sigma_{p(\x)} A^T)^{-1} (A \Sigma_{q(\x)} A^T)\big)^2} \\
        & = \frac{1}{d} \Tr{ \big( I - (A^T)^{-1} \Sigma_{p(\x)}^{-1} A^{-1})(A \Sigma_{q(\x)} A^T)\big)^2} \\
        &= \frac{1}{d} \Tr{ \big( I - (A^T)^{-1} \Sigma_{p(\x)}^{-1} \Sigma_{q(\x)} A^T\big)^2} \\
        &= \frac{1}{d} \Tr{(A^T)^{-1} \big( I - \Sigma_{p(\x)}^{-1} \Sigma_{q(\x)} \big)^2 A^T} \\
        &= b_{\mathit{cov}}^2( \Sigma_{p(\x')}, \Sigma_{q(\x')}).
    \end{align*}
    
\end{proof}

\section{Proofs} \label{sec: proofs}

\subsection{Lemma \ref{lemma: eevpd}: EEVPD}
\begin{proof}
    We will work in the eigenbasis, where the dynamics is decoupled.  
    It then suffices to analyze each dimension separately. Let $x_i(t)$ and $u_i(t)$ be components of $\x(t)$ and $\U(t)$ along the dimension that we analyze and $\sigma_i^2$ the eigenvalue of the covariance matrix in that direction. The velocity Verlet integrator update \eqref{eq: LF def} with step size $\epsilon$
    can be written compactly as
    \citep{gouraud2025hmc}
    \begin{equation*} \label{shadow}
        \Phi_{\epsilon}(\Z_i) = 
        \begin{bmatrix}
            \cos h & \alpha \sigma_i \sin h \\
            -\alpha^{-1} \sigma_i^{-1} \sin h & \cos h
        \end{bmatrix} \Z_i
        \equiv A \Z_i.
    \end{equation*}
    Here, $\Z_i(t) = (x_i(t), u_i(t))$, $\alpha = (1- y_i / 4)^{-1/2}$, $y_i = \epsilon^2 / \sigma_i^2$ and $\sin h = \sqrt{y_i} / \alpha$. The energy error is
    \begin{equation*}
        \Delta_H^i \equiv \Delta_H(\Phi_{\epsilon}(\Z_i), \Z_i) = \frac{1}{2} \Phi_{\epsilon}(\Z_i)^T D \Phi_{\epsilon}(\Z_i) - \frac{1}{2} \Z_i^T D \Z_i = \frac{1}{2} \Z_i^T M \Z_i,
    \end{equation*}
    where $D = \mathrm{Diag}(1/\sigma_i^2, 1)$ and
    \begin{equation*}
        M = A^T D A - D = (1 - \alpha^2)
        \begin{bmatrix}
            (\alpha \sigma_i)^{-2} \sin^2 h & - (\alpha \sigma_i)^{-1} \sin h \cos h    \\ 
             - (\alpha \sigma_i)^{-1} \sin h \cos h  & - \sin^2 h\\ 
        \end{bmatrix}.
    \end{equation*}
    Denote by $\tilde{p}_i$ the stationary distribution for $\Z_i$, namely 
    $x_i \sim \mathcal{N}(0, \tilde{\sigma}_i), \, u_i \sim \mathcal{N}(0, 1)$, where $\tilde{\sigma}_i = \sigma_i / \alpha$ is taken from Equation \eqref{eq: sigma}.
    
    The contribution to the $\mathrm{EEVPD}$ from $\Z_i$ is
    \begin{equation*}
        \mathrm{Var}_{\tilde{p}_i}[\Delta_H^i] = \mathbb{E}_{\tilde{p}_i}[(\Delta_H^i)^2] - \mathbb{E}_{\tilde{p}}[\Delta_H^i]^2.
    \end{equation*}

    The expectation value in the second term is
    \begin{equation*}
        \mathbb{E}_{\tilde{p}}[\Delta_H^i] = \frac{1}{2} \big( M_{11} \mathbb{E}_{\tilde{p}}[x^2] + M_{22} \mathbb{E}_{\tilde{p}}[u^2]\big)
    \end{equation*}
    and for the first   
    \begin{align*}
        \mathbb{E}_{\tilde{p}}[(\Delta_H^i)^2] &= \frac{1}{4} \mathbb{E}_{\tilde{p}}[(M_{11} x^2 + 2 M_{12} x u + M_{22} u^2)^2 ] \\ \nonumber
        &=
        \frac{1}{4} \big( M_{11}^2 \mathbb{E}_{\tilde{p}}[x^4] + M_{22}^2 \mathbb{E}_{\tilde{p}}[u^4] + (4 M_{12}^2 + 2 M_{11} M_{22}) \mathbb{E}_{\tilde{p}}[x^2]\mathbb{E}_{\tilde{p}}[u^2]\big).
    \end{align*}
    In both expressions we have dropped the vanishing contributions that contain $\mathbb{E}_{\tilde{p}}[x u] = 0$, $\mathbb{E}_{\tilde{p}}[x u^3] = 0$ or $\mathbb{E}_{\tilde{p}}[x^3 u] = 0$. 
    Combining both terms together and using $\mathbb{E}_{\tilde{p}}[x^4] = 3 \mathbb{E}_{\tilde{p}}[x^2]^2$ and $\mathbb{E}_{\tilde{p}}[u^4] = 3 \mathbb{E}_{\tilde{p}}[u^2]^2$ we get
    \begin{equation*}
        \mathrm{Var}[\Delta_H^i] = \frac{1}{4} \big( 2 M_{11}^2 \mathbb{E}_{\tilde{p}}[x^2]^2 + 2 M_{22}^2 \mathbb{E}_{\tilde{p}}[u^2]^2 + 4 M_{12}^2 \mathbb{E}_{\tilde{p}}[x^2 u^2]\big).
    \end{equation*}
    Inserting 
    $\mathbb{E}_{\tilde{p}}[x^2] = \sigma^2 \alpha^2$ and
    $\mathbb{E}_{\tilde{p}}[u^2] = 1$ gives
    \begin{align*}
        \mathrm{Var}[\Delta_H^i] = \frac{(1-\alpha^2)^2}{2} \big( \sin^4 h + \sin^4 h + 2 \sin^2 h \cos^2 h \big) = (1- \alpha^2)^2 \sin^2 h = \frac{y^3}{16 (1 - y/4)},
    \end{align*}
    which is $E(y)$ from the statement of the theorem. $\mathrm{EEVPD}$ is thus
    \begin{equation*}
        \mathrm{EEVPD} = \frac{1}{d} \sum_{i=1}^d \mathrm{Var}[\Delta_H^i] = \frac{1}{d} \sum_{i=1}^d E(y_i).
    \end{equation*}
\end{proof}

\subsection{Theorem \ref{theorem}: bias bounds} \label{sec: Wasserstein}



\begin{proof}
    Let's start with the covariance matrix bias.
    Due to Equation \eqref{eq: sigma}, the asymptotic covariance error \eqref{bsigma} is
    \begin{equation*}
        b^2_\mathit{cov}(\Sigma, \tilde{\Sigma}) = \frac{1}{d} \sum_{i=1}^d \big( 1 - (1 - \epsilon^2 / 4 \sigma_i^2 )^{-1} \big)^2 = \frac{1}{d} \sum_{i=1}^d B(\epsilon^2 / \sigma_i^2),
    \end{equation*}
    where
        $B(y) = \frac{y^2}{16 (1-y/4)^2}$.
    We thus see that $\varphi$ from the statement of the theorem is $\varphi = E \circ B^{-1}$.
    Lemma \ref{lemma: convex} shows that $\varphi(x)$ restricted to $0 < x < 11 - 4 \sqrt{7}$ is a convex, monotonically increasing function. By Jensen's inequality this implies that 
    \begin{equation*}    
    \varphi(b_{\mathit{cov}}(\Sigma, \tilde{\Sigma})^2) \leq \mathrm{EEVPD},
    \end{equation*}
    as long as $\varphi(b^2_\mathit{cov}) < \varphi(11 - 4 \sqrt{7}) = (-134 + 52 \sqrt{7})/ 9 \approx 0.397674$. The assumption of the theorem that $\mathrm{EEVPD} < 0.397$ is a sufficient condition for this to hold.
    Since $\varphi(x)$ is not a linear function, Jensen's inequality becomes an equality if and only if $\sigma_i = \sigma_j$ for all $i, j$.
    $\varphi$ is a monotonically increasing function so it is a bijection and its inverse is also a monotonically increasing function. The inverse can therefore be applied to both sides of the above inequality to yield the desired result. 

    The proof of the Wasserstein distance part of the theorem is similar. For zero-mean Gaussian distributions with covariance matrices $\Sigma_{p}$ and $\Sigma_{q}$, the Wasserstein distance reduces to \citep{olkin_distance_1982}
    \begin{equation*}
        \mathcal{W}_2(\mathcal{N}(0, \Sigma_p), \mathcal{N}(0, \Sigma_q))^2 = \Tr{\Sigma_p + \Sigma_q - 2 \big( \Sigma_p^{1/2} \Sigma_q   \Sigma_p^{1/2} \big)^{1/2}}.
    \end{equation*}
    By using Equation \eqref{eq: sigma} for unadjusted HMC or LMC with the velocity Verlet integrator we therefore get
    \begin{equation*}
        \mathcal{W}_2(p, \tilde{p})^2 = \epsilon^2 \sum_{i=1}^d W(\epsilon^2 / \sigma_i^2),
    \end{equation*}
    where
    \begin{equation*}
        W(y) = \frac{2 (1 - y/8 -\sqrt{1-y/4})}{y (1-y/4)}
    \end{equation*}
    is as in the statement of the theorem.
    Lemma \ref{lemma: convex wasserstein} shows that $\varphi_W(x)$ is a convex, monotonically increasing function for $0 < \varphi_W(x) < 27/4 = 6.75$. By Jensen's inequality this implies that
    \begin{equation*}
        \varphi_W(w) \leq \mathrm{EEVPD},
    \end{equation*}
    as long as $\mathrm{EEVPD} < 6.75$. Here $w= \mathcal{W}_2(p, \tilde{p})^2 / d \epsilon^2$.
    Since $\varphi_W$ is not a linear function, Jensen's inequality becomes an equality if and only if $\sigma_i = \sigma_j$ for all $i, j$.
    $\varphi_W$ is a monotonically increasing function so it is a bijection and its inverse is also a monotonically increasing function. The inverse can therefore be applied to both sides of the above inequality to yield 
    \begin{equation*}
        w \leq \varphi_W^{-1}(\mathrm{EEVPD}).
    \end{equation*}
\end{proof}

\subsection{Convexity} \label{sec: convex proof}

\begin{lemma} \label{lemma: convex}
    $\varphi(x) = 4 x^{3/2} / (1 + x^{1/2})^2$ is monotonically increasing for $x > 0$ and convex for $0 < x < 11 - 4 \sqrt{7}$.   
\end{lemma}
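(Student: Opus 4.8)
The plan is to verify the two claimed properties of $\varphi(x) = 4x^{3/2}/(1+x^{1/2})^2$ by a direct substitution $x = t^2$ with $t > 0$, which removes the fractional powers and turns everything into rational functions of $t$. Under this substitution $\varphi = 4t^3/(1+t)^2$, and since $t = x^{1/2}$ is an increasing change of variable, monotonicity of $\varphi$ in $x$ is equivalent to monotonicity of $g(t) \equiv 4t^3/(1+t)^2$ in $t$. For convexity one must be a little careful: $\varphi$ is convex in $x$ iff $\frac{d^2\varphi}{dx^2} \geq 0$, and by the chain rule with $x = t^2$ one has $\frac{d\varphi}{dx} = \frac{1}{2t} g'(t)$ and then $\frac{d^2\varphi}{dx^2} = \frac{1}{2t}\frac{d}{dt}\!\left(\frac{g'(t)}{2t}\right) = \frac{1}{4t^2}\!\left(g''(t) - \frac{g'(t)}{t}\right)$, so the sign of $\varphi''$ in $x$ is the sign of $t\,g''(t) - g'(t)$.

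First I would compute $g'(t)$: differentiating $g(t) = 4t^3 (1+t)^{-2}$ gives $g'(t) = 4\big(3t^2(1+t)^{-2} - 2t^3(1+t)^{-3}\big) = \frac{4t^2(3(1+t) - 2t)}{(1+t)^3} = \frac{4t^2(t+3)}{(1+t)^3}$. For $t > 0$ every factor is positive, so $g'(t) > 0$ and hence $\varphi$ is monotonically increasing on $x > 0$. Next I would compute $g''(t)$ by differentiating $g'(t) = 4t^2(t+3)(1+t)^{-3} = 4(t^3 + 3t^2)(1+t)^{-3}$, obtaining $g''(t) = 4\big((3t^2 + 6t)(1+t)^{-3} - 3(t^3+3t^2)(1+t)^{-4}\big) = \frac{4\big((3t^2+6t)(1+t) - 3(t^3+3t^2)\big)}{(1+t)^4}$. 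The numerator simplifies: $(3t^2+6t)(1+t) = 3t^3 + 3t^2 + 6t^2 + 6t = 3t^3 + 9t^2 + 6t$, minus $3t^3 + 9t^2$ leaves $6t$, so $g''(t) = \frac{24t}{(1+t)^4}$.

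Then I would form $t\,g''(t) - g'(t) = \frac{24t^2}{(1+t)^4} - \frac{4t^2(t+3)}{(1+t)^3} = \frac{4t^2\big(6 - (t+3)(1+t)\big)}{(1+t)^4} = \frac{4t^2\big(6 - (t^2 + 4t + 3)\big)}{(1+t)^4} = \frac{4t^2(3 - 4t - t^2)}{(1+t)^4} = \frac{-4t^2(t^2 + 4t - 3)}{(1+t)^4}$. This is nonnegative precisely when $t^2 + 4t - 3 \leq 0$, i.e. when $t \leq -2 + \sqrt{7}$ (taking the positive root, since $t > 0$). Squaring, $x = t^2 \leq (-2+\sqrt{7})^2 = 4 - 4\sqrt{7} + 7 = 11 - 4\sqrt{7}$, which is exactly the claimed threshold. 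Hence $\varphi$ is convex on $0 < x < 11 - 4\sqrt{7}$. I expect the only mild subtlety to be keeping the chain-rule bookkeeping straight when translating $\varphi''$ in $x$ into a condition on $g$ in $t$ — in particular remembering that $\varphi'(x) = g'(t)/(2t)$ rather than just $g'(t)$ — but once that is pinned down everything reduces to factoring a quadratic, so there is no real obstacle.
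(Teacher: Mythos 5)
Your proof is correct and takes essentially the same route as the paper: compute the first derivative (which reduces to $2x^{1/2}(3+x^{1/2})/(1+x^{1/2})^3 > 0$) and the second derivative, whose sign is governed by the quadratic $3 - 4t - t^2$ in $t = x^{1/2}$ with positive root $-2+\sqrt{7}$, giving the threshold $x < 11 - 4\sqrt{7}$. The explicit substitution $x = t^2$ and the careful chain-rule bookkeeping are just a cleaner presentation of the identical computation.
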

\begin{proof}
    $\varphi(x)$ is monotonically increasing because its derivative
    \begin{equation*}
        \varphi'(x) = \frac{2 x^{1/2} (3 + x^{1/2})}{(1 + x^{1/2})^3}
    \end{equation*}
    is positive (all terms are positive). 
    To show that it is convex, we compute its second derivative
    \begin{equation*}
        \varphi''(x) = 
        \frac{3 - 4 x^{1/2} - x}{x^{1/2} (1 + x^{1/2})^4}
    \end{equation*}
    The denominator is positive for $x>0$. The numerator is a quadratic polynomial $p(y) = 3 - 4 - y^2$ in $y = \sqrt{x}$. Its roots are $y_{1,2} = - 2 \pm \sqrt{7}$. Since $p(0) = 3 > 0$ the numerator is positive for $0 < y < - 2 + \sqrt{7}$ corresponding to $0 < x < (- 2 + \sqrt{7})^2 = 11 - 4 \sqrt{7}$ so $\varphi(x)$ restricted to this interval is convex.
\end{proof}


\begin{lemma} \label{lemma: convex wasserstein}
     $\varphi_W(x)$ from Theorem \ref{theorem} is monotonically increasing and convex for $0 < \varphi_W(x) < 27/4$.
\end{lemma}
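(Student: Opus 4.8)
The plan is to mimic the structure of the proof of Lemma~\ref{lemma: convex}, but the composition $\varphi_W = E \circ W^{-1}$ makes direct differentiation unpleasant, so instead I would argue parametrically. Introduce the variable $y \in (0,4)$ (the stable range, since $\epsilon < 2\min_i\sigma_i$ forces $y_i = \epsilon^2/\sigma_i^2 < 4$) and regard both $E(y) = y^3/(16(1-y/4))$ and $W(y) = \frac{2(1 - y/8 - \sqrt{1-y/4})}{y(1-y/4)}$ as functions of $y$. Then $\varphi_W(x)$ is the curve $x = W(y)$, $\varphi_W = E(y)$ traced out as $y$ ranges over its domain. First I would check that $W$ is a smooth strictly increasing bijection from $(0,4)$ onto $(0, 27/4)$ (the endpoint $W(4^-) = 27/4$ is where the claimed range condition $\varphi_W(x) < 27/4$ comes from); this makes $W^{-1}$ well-defined and smooth, and $x \mapsto E(W^{-1}(x))$ genuinely a function. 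A convenient substitution here is $s = \sqrt{1 - y/4} \in (0,1)$, so $y = 4(1-s^2)$ and $W$ becomes a rational function of $s$, which should make the monotonicity of $W$ and the limit at $s\to 0$ transparent.

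For monotonicity of $\varphi_W$: since $E$ is increasing on $(0,4)$ (its derivative $E'(y) = \frac{y^2(3 - y/2 \cdot \text{...})}{\cdots}$ is manifestly positive there — all factors positive) and $W^{-1}$ is increasing, $\varphi_W = E \circ W^{-1}$ is a composition of increasing functions, hence increasing. For convexity I would use the standard parametric criterion: writing $x = W(y)$, $F = E(y)$, one has $\varphi_W''(x) = \frac{E''(y) W'(y) - E'(y) W''(y)}{W'(y)^3}$. Since $W'(y) > 0$, convexity of $\varphi_W$ on the stated range is equivalent to the sign condition $E''(y) W'(y) - E'(y) W''(y) \ge 0$ for $y$ in the corresponding subinterval of $(0,4)$. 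I would compute $E', E'', W', W''$ explicitly (ideally in the $s$ variable to keep the algebra rational), assemble the numerator $N(y) = E''W' - E'W''$, clear positive denominators, and reduce the claim to showing a single explicit polynomial in $y$ (or in $s$) is nonnegative on the relevant interval — exactly the style of the numerator argument in Lemma~\ref{lemma: convex}, where it came down to the quadratic $3 - 4\sqrt{x} - x$.

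The main obstacle I expect is the final polynomial positivity check: unlike the clean quadratic in Lemma~\ref{lemma: convex}, here $N(y)$ after clearing denominators is likely a higher-degree polynomial (the $\sqrt{1-y/4}$ terms in $W$ mean even after the $s$-substitution one gets a polynomial of degree perhaps $4$--$6$), and one must verify it has the right sign precisely on the interval $y \in (0, y^\star)$ where $y^\star$ is the preimage under $W$ of the convexity threshold — equivalently the interval on which $\varphi_W < 27/4$ combined with wherever convexity actually holds. I would handle this by factoring out obvious positive monomials, locating the real roots of what remains (either in closed form if it factors, or by a sign analysis / Sturm-type argument), and confirming that the smallest positive root is at least $y^\star$; if the interval of convexity is in fact all of $(0,4)$ — which the phrasing "for $0 < \varphi_W(x) < 27/4$" suggests — then it suffices to show $N(y) \ge 0$ on the whole of $(0,4)$, which is cleaner. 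A sanity check at $y \to 0$ (where $W(y) \sim y/8$ and $E(y) \sim y^3/16$, so $\varphi_W(x) \sim 4 x^3 \cdot 8^{?}$... more precisely $\varphi_W(x) \approx 4x^3$ echoing the small-$b^2_{\mathit{cov}}$ approximation in Table~\ref{table: eevpd values}, which is convex and increasing near $0$) confirms the endpoint behavior and guards against algebra errors.
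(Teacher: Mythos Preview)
Your overall strategy --- parametrize by $y \in (0,4)$, establish $E'(y)>0$ and $W'(y)>0$ for monotonicity, then reduce convexity to the sign of $E''(y)W'(y)-E'(y)W''(y)$ after a rationalizing substitution --- is exactly what the paper does. The paper's substitution is $y=4\sin^2\xi$ followed by $s=1/\cos\xi$, which is just $s=(1-y/4)^{-1/2}$, i.e.\ the reciprocal of your proposed $s$; either choice works.

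There is, however, a genuine misreading that would derail the final step. The condition ``$0<\varphi_W(x)<27/4$'' in the lemma is a condition on the \emph{output} $\varphi_W(x)=E(y)$, not on the input $x=W(y)$. Your claim that $W(4^-)=27/4$ is wrong: as $y\to 4^-$ the numerator of $W$ tends to $1$ while the denominator tends to $0$, so $W(4^-)=+\infty$. What actually equals $27/4$ is $E(3)$, since $E(3)=3^3/(16\cdot\tfrac14)=27/4$. Thus the hypothesis $\varphi_W(x)<27/4$ is equivalent to $y<3$, not $y<4$. This matters because the convexity numerator in fact changes sign at $y=3$: in the paper's variable $s=(1-y/4)^{-1/2}$ the expression factors with a $(2-s)$ term, and $s=2$ corresponds exactly to $y=3$. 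If you attempt to show $N(y)\ge 0$ on all of $(0,4)$ you will fail; the correct target interval is $(0,3)$, and the $27/4$ comes from $E(3)$, not from any limit of $W$.

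Two small slips in your sanity check: near $y=0$ one has $1-y/8-\sqrt{1-y/4}\sim y^2/128$, so $W(y)\sim y/64$, not $y/8$; and the approximation $\varphi(b^2)\approx 4b^3$ in the table is for $\varphi$, not $\varphi_W$, so it does not serve as a cross-check here.
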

\begin{proof}
    To prove that $\varphi_W$ is monotonically increasing we will show that its derivative is positive. We cannot solve for for $W^{-1}$ explicitly, but nontheless
    \begin{equation*}
        \varphi_W'(w) = E'(W^{-1}(w)) (W^{-1})'(w) = \frac{E'(y)}{W'(y)},
    \end{equation*}
    where we have denoted $y = W^{-1}(w)$. We will show that both denominator and numerator are positive for $0 \leq y < 4$, i.e. in the range where the velocity Verlet integrator is stable. 
    
    The numerator is
    \begin{equation*}
        E'(y) = \frac{3 y^2 (1- y/6)}{(1 - y/4)^2}
    \end{equation*}
    which is positive for $0 < y < 4$.
    
    To simplify the denominator we use the reparametrization $y = 4 \sin^2 \xi$, such that $0 \leq \xi < \pi/2$. In the new parametrization
    \begin{equation*}
        W(\xi(y)) = \frac{\sin^4 (\xi / 2)}{4 \sin^2(2 \xi)}.
    \end{equation*}
    We get 
    \begin{equation*}
        W'(y) = \frac{1 + \cos \xi - \cos^2 \xi}{128 \cos^4 (\xi/2) \cos^4 (\xi)} \geq \frac{\cos \xi}{128 \cos^4 (\xi/2) \cos^4 (\xi)} > 0,
    \end{equation*}
    where we have used that $- \cos^2\xi > -1$.

    To prove that $\varphi_W$ is convex, we will show that its second derivative is positive. We have
    \begin{equation*}    
        \varphi_W''(w) = E''(W^{-1}(w)) ((W^{-1})'(w))^2 + E'(W^{-1}(w)) (W^{-1})''(w) = \frac{E''(y)}{W'(y)^2} - \frac{E'(y) W''(y)}{W'(y)^3}
    \end{equation*}
    Which after some algebra reduces to
    \begin{equation*} 
        \varphi_W''(y) = \frac{(2-s)(s^2 - 1) (s^4 + 4s^3 + 7s^2 + 6s - 6)}{8 (s^2 + s - 1)},
    \end{equation*}
    where we have first reparametrized $y(\xi) = 4 \sin^2 \xi$ and then $s(\xi) = 1 / \cos(\xi)$. 
    The range $0 < y < 4$ corresponds to $0 < \xi < \pi /2$ which in turn corresponds to $1 < s < \infty$.
    
    All the factors except for $(2 - s)$ are non-negative at $s = 1$ and monotonically increasing for $s > 1$, so they are all positive for $s > 1$. The second derivative is then positive for $1 < s < 2$, corresponding to $0 < \xi < \pi /3$ or $0 < y < 3$. 
    $\varphi_W(x)$ is thus monotonically increasing and convex for $0 < \varphi_W(x) < E(3) = 27/4$.
\end{proof}

\section{Bound Sensitivity} \label{sec:tails}

\begin{figure}
    \centering
    \includegraphics[width=\linewidth]{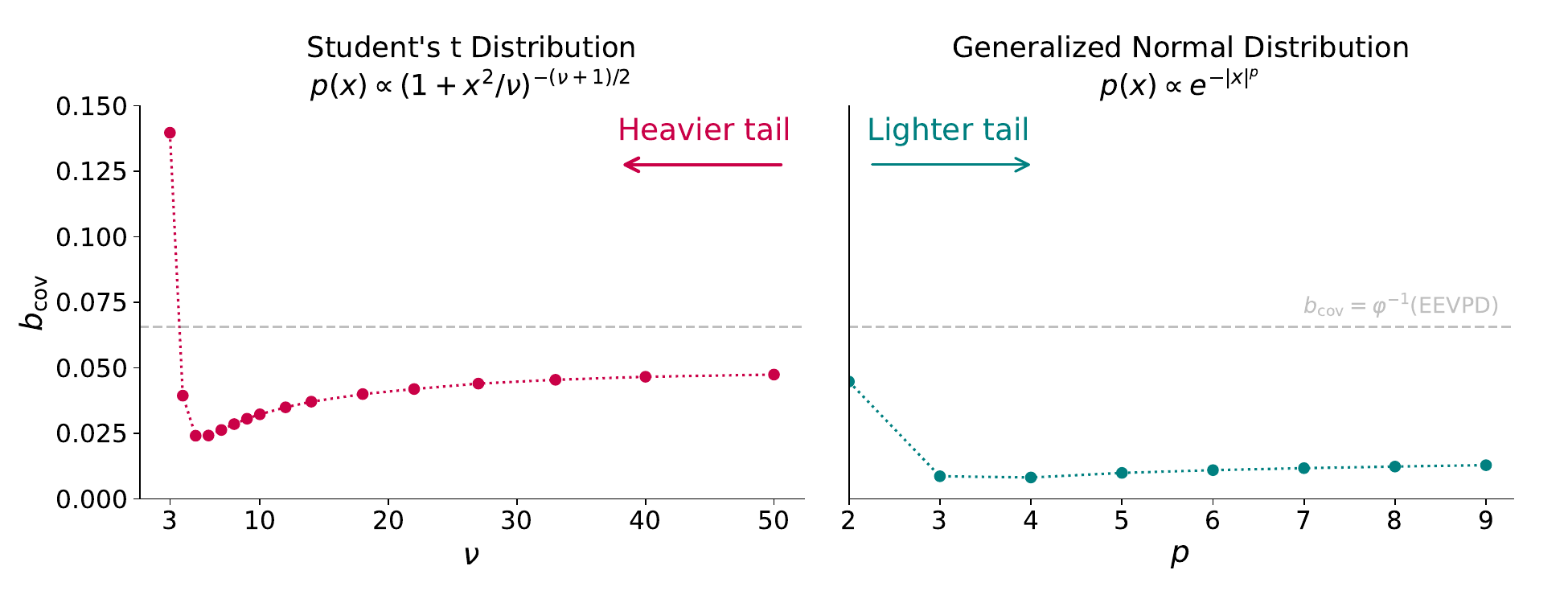}
    \caption{Asymptotic bias as a function of the weight of the tails of the distribution. 
    The bias bound from Theorem \ref{theorem}, which is valid for Gaussians with unadjusted HMC is shown for reference. Standard Gaussian, which is the limit of $\nu \xrightarrow{} \infty$ and $p \xrightarrow[]{} 2$ is a local maximum of the bias.
    }
    \label{fig:tails}
\end{figure}
We here study the sensitivity of the bias bound to the tails of the distribution. We study two one parametric families of distributions that reduce to the standard Gaussian in one limit:
\begin{itemize}
    \item Student's t distribution with a parameter $\nu$ has a probability density $p(x) \propto (1 + x^2 / \nu)^{-(\nu + 1)/ 2}$. In the limit $\nu \xrightarrow[]{} \infty$ this reduces to the standard Gaussian. With decreasing $\nu$ the tails of the distribution get heavier. In the  $\nu = 2$ extreme we obtain the Cauchy distribution whose tails are so heavy that the second moments no longer exist.

    \item Generalized Normal distribution with a parameter $p$ has a probability density $p(x) \propto e^{-\vert x \vert^p}$. In the limit $p = 2$ this reduces to the standard Gaussian. With increasing $p$ the tails of the distribution get lighter. 
\end{itemize}
We take the 100 dimensional products of these distributions to study targets in 100 dimensions. On a grid of parameters $\nu$ and $p$ we then run unadjusted MCLMC and fix the step size so that EEVPD equals $10^{-3}$. We then run very long chains with this step size (50 million steps) and check that the covariance matrix bias no longer decays. Figure \ref{fig:tails} shows the resulting asymptotic bias as a function of $\nu$ and $p$. It can be seen that the standard Gaussian is a local maximum of the bias at a fixed EEVPD, i.e., either increasing or decreasing the weight of the tails reduces the bias. The bias does start to increase again as we approach $\nu = 2$, which could be due to the proximity of the failure of the bias definition, as the second moments no longer exist for $\nu = 2$.

\section{Bias-Variance Tradeoff} \label{sec: bias-variance tradeoff}

We have shown how to control the covariance matrix bias to be below some desired threshold. However, typically bias is not of direct interest, but instead we want to control the error of the expectation values, which additionally contains the variance, i.e. the second term in Equation \eqref{eq: bias-variance}. The variance of a stationary chain is
\begin{equation} \label{eq: tauint}
    \mathrm{Var}[\bar{f}] = \mathrm{Var}_p[f] (1 + 2 \sum_{k = 1}^{n} (1-k/n) \rho_{k}) / n \equiv \mathrm{Var}_p[f] \tau_{\mathrm{int}} / n,
\end{equation}
where the autocorrelation coefficients are $\rho_k = \mathbb{E}[(f(x_i) - \mathbb{E}_p[f]) (f(x_{i+k}) - \mathbb{E}_p[f])] / \mathrm{Var}_p[f]$.

We would therefore like to know how to optimally set the bias, given some error tolerance. Here we will provide a heuristic, based on estimating the second moment $\mathbb{E}[x^2]$ of a one-dimensional standard Gaussian target with velocity Verlet unadjusted HMC. 

In this case, $\rho_k = \rho^k$ \citep{gouraud2025hmc}, where 
\begin{equation}
    \rho = \cos^2\big( \frac{T}{\epsilon} \arcsin (\alpha \epsilon / \sigma) \big).
\end{equation}
This makes the sum in Equation \eqref{eq: tauint} expressable in terms of geometric series:

\begin{equation}
    \tau_{\mathrm{int}} = 1 + 2 S(\rho) - 2 \rho S'(\rho) / n = \frac{1+\rho}{1-\rho} \bigg( 1- \frac{2 \rho}{n} \frac{1-\rho^n}{1-\rho^2} \bigg),
\end{equation}
where $S(\rho) = \sum_{k=1}^n \rho^k = \rho (1- \rho^n) / (1-\rho)$. 
The variance at the lowest order in the small-$\epsilon$-expansion is then
\begin{equation} \label{eq: variance expansion}
    \mathrm{Var}[x^2] \asymp \frac{2 L}{N \epsilon}  \lim_{\epsilon \xrightarrow[]{} 0} \tau_{\mathrm{int}}(\epsilon) = c_{v} / \epsilon,
\end{equation}
where $c_v$ is constant, independent of the step size and $\asymp$ denotes asymptotic equivalence as $\epsilon \xrightarrow[]{} 0$.
The bias of the second moment in this limit is
\begin{equation} \label{eq: bias expansion}
    \mathrm{Bias}[x^2] = \frac{1}{1 - \epsilon^2 / 4} - 1 \asymp \epsilon^2 / 4 = c_b \epsilon^2,
\end{equation}
where we have used Equation \eqref{eq: sigma} and denoted $c_b = 1/4$.

Combining Equations \eqref{eq: variance expansion} and \eqref{eq: bias expansion} we get for the RMSE in the small step size limit:
\begin{equation}
    \mathrm{RMSE}^2 = c_b \epsilon^4 + c_v / \epsilon,
\end{equation}
where $c_b$ and $c_v$ are defined above and are independent of the step size.
We would like to set the step size so that it minimizes the RMSE. The optimum is found at
\begin{equation}
    0 = \frac{d}{ d \epsilon} \mathrm{RMSE}^2 = 4 c_b \epsilon^3 - c_v / \epsilon^2,
\end{equation}
which gives the optimal step size $\epsilon_{\mathrm{opt}} = (c_v / 4 c_b)^{1/5}$. At the optimal step size, bias squared is one fifth of the error squared:
\begin{equation}
    \frac{\mathrm{Bias}^2}{\mathrm{RMSE}^2} = \frac{1}{1 + \frac{c_v / \epsilon_{\mathrm{opt}}}{c_v \epsilon_{\mathrm{opt}}^4}} = \frac{1}{5}.
\end{equation}

This is the prescription that we use in Table \ref{table: eevpd values} and in the numerical experiments of Section \ref{sec: experiments}. It is based on the Gaussian assumption, so it might suboptimal for the non-Gaussian distributions. However, the samples would still eventually converge below the required RMSE, as long as the EEVPD-based bound holds. We note that the results in Section \ref{sec: experiments} do not even exhibit a significant decrease in efficiency compared to the grid search results and all unadjusted samplers converge to the desired accuracy.

\begin{table}
    \centering
    \begin{tabular}{c|c|c}
         relative RMSE tolerance & Bias tolerance & $\mathrm{EEVPD}$ \\
         \hline
         50\% & 22\% & $3.0 \times 10^{-2}$ \\
         10\% & 4.5\% & $3.3 \times 10^{-4}$ \\
         5\% & 2.2\% & $4.3 \times 10^{-5}$ \\
         1\% & 0.45\% & $3.5 \times 10^{-7}$ \\
    \end{tabular}
    \caption{Tabulated values of $\mathrm{EEVPD}$ (third column) that ensure a desired asymptotic covariance matrix bias (second column). A useful, quick recipe to compute approximation for small $b^2_{\mathit{cov}}$ is $\mathrm{EEVPD} = \varphi(b^2_{\mathit{cov}}) \approx 4 b_{\mathit{cov}}^3$.
    Optimal asymptotic bias should be smaller than the given relative root mean square error tolerance (Equation \eqref{eq: rmse}), for example, one can use the prescription $\mathrm{Bias}^2 < \mathrm{RMSE}^2 / 5$ from Appendix \ref{sec: bias-variance tradeoff}. This error is given in the first column.}
    \label{table: eevpd values}
\end{table}

\section{Step Size Tuning} \label{scheme}

\begin{figure*}
\centering
\includegraphics[scale=.42]{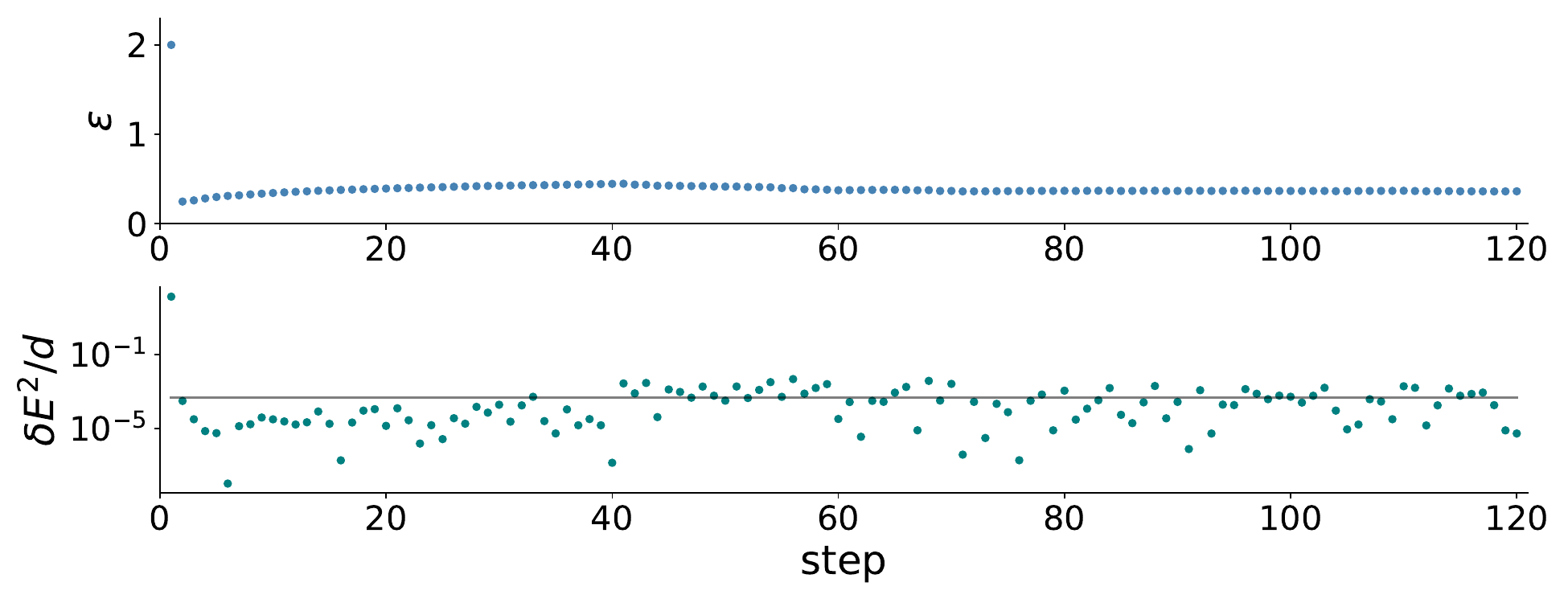}
\caption{The step size tuning algorithm from Section \ref{scheme}, applied to the Rosenbrock target distribution in $d = 36$ with MCLMC sampler. The sequential algorithm was initialized from the standard Gaussian distribution with a random initial velocity orienation. Top: the step size as a function of leapfrog integration steps. Bottom: per dimension squared energy error for each step. The algorithm quickly converges to the targeted EEVPD = 0.001, shown with a black line.}
\label{fig: adaptation}
\end{figure*}

\begin{algorithm}[tb]
  \caption{Step size tuning}
  \label{alg}
  \begin{algorithmic}
    \STATE {\bfseries Input:} initial condition $(\x,\U)$; initial step size $\epsilon>0$; number of integration steps $N>0$; desired EEVPD $\alpha>0$.
    \STATE {\bfseries Output:} step size $\epsilon$.
    \STATE $A \gets 0$, \ $B \gets 0$
    \FOR{$n \gets 0$ {\bfseries to} $N$}
      \STATE $(\x,\U),\,\Delta E \gets \Phi_{\epsilon}(\x,\U)$
      \STATE $\xi \gets \text{Equation } \eqref{eq: predictor}(\Delta E,\epsilon,\alpha)$
      \STATE $A \gets A\gamma + \xi\, w(\xi)$
      \STATE $B \gets B\gamma + w(\xi)$
      \STATE $\epsilon \gets (A/B)^{-1/6}$
    \ENDFOR
  \end{algorithmic}
\end{algorithm}

We here describe the scheme for quickly tuning the step size $\epsilon$ to achieve a desired EEVPD. We verified that it indeed yields the desired EEVPD in the experiments performed in this work, but we do not provide any convergence guarantees. More generally, one can instead use other stochastic optimization routines, for example the dual averaging algorithm \citep{hoffman_no-u-turn_2014, nesterov_primal-dual_2009}.

Suppose we did a leapfrog step with size $\epsilon_k$ and found some energy error $\Delta_H^k$. Using only this knowledge and the scaling from Equation \eqref{lemma: eevpd} for small step sizes, $\Delta_H \propto \epsilon^6$, we could estimate the optimal step size to use in the next iteration as $\epsilon_{k+1} = \xi_k^{-1/6}$ where
\begin{equation} \label{eq: predictor}
    \xi_k = \frac{(\Delta_H^k)^2}{d}\frac{1}{\alpha \, \epsilon_k^6}
\end{equation}
and $\alpha $ is the desired EEVPD. As we do more leapfrog steps, we can improve our estimate by averaging the energy errors. Our estimate of the optimal step size is then a weighted sum:
\begin{equation}
    \epsilon_{n+1} =  \bigg( \frac{\sum_{k = 1}^n w(\xi_k) \gamma^{n-k} \, \xi_k }{\sum_{k = 1}^n w(\xi_k) \gamma^{n-k}} \bigg)^{-1/6}.
\end{equation}
We have introduced two types of weights:
\begin{itemize}
    \item The weights $w$ parametrize our trust in the predictions from the too large and too small $\epsilon$. We take the log-normal penalty
    \begin{equation} \label{eq: penalty}
        w(\xi) = \exp{-\frac{1}{2} (\log \xi )^2 / \sigma_{\xi}^2},
    \end{equation} 
    with $\sigma_{\xi} = 1.5$.

    \item $\gamma$ is the forgetting factor. It is related to the effective sample size $n$ of the estimate (if $w$ were constant) by $\gamma = \frac{n - 1}{n + 1}$. $n$ is also the number of steps after which the weights have decayed to $e^{-2} = 0.13$. In general, we don't want $n$ to be too small, so that EEVPD is well determined and yet not too large during the burn-in such that the initially heavily biased estimates are forgotten quickly.
    We find $n = 50$ to work well on all the experiments considered in this work. An example run is shown in Figure \ref{fig: adaptation}.
\end{itemize}

The pseudocode for the proposed algorithm is shown in \ref{alg}.



\section{Model Details} \label{sec: experiments appendix}

\subsection{Benchmarks}
\label{sec: benchmarks}

The following benchmarks are used:
\begin{itemize}
    
    \item A Standard Gaussian in $d = 100$.
    
    \item An ill-conditioned Gaussian in $d = 100$ and condition number $\kappa = 1000$. The eigenvalues of the covariance matrix are equally spaced in log. 
    
    \item A Rosenbrock function with $Q = 0.1$ from \cite{grumitt_deterministic_2022}. This is a banana shaped target in two dimensions, see Figure 8 in \cite{robnik_microcanonical_2024}. We use a product of 18 independent copies, so the total dimension of the target is 36. An exception is Figure \ref{fig:scaling} where we study the performance as a function of the number of copies.
    
    \item A Funnel problem in 101 dimensions: this is a hierarchical Bayesian model with a funnel shape \citep{grumitt_deterministic_2022}. The goal is to infer the hierarchical parameter $\theta$ and the latent variables $\{ z_i\}_{i=1}^{100}$, given the noisy observations $y_i \sim \mathcal{N}(z_i, 1)$. The prior is Neal's funnel \citep{neal_mcmc_2011}: 
    $\theta \sim \mathcal{N}(0, 3)$, 
    $z_i \sim \mathcal{N}(0, e^{\theta/2})$.
    We set $\theta_{\mathrm{true}} = 0$ and generate the data with the generative process described above. Given this data we then sample from the posterior for $\theta$ and $\{ z_i \}_{i=1}^{100}$.

    \item A Brownian motion example from the Inference Gym \citep{inferencegym2020}, where it is named \changefont{BrownianMotionUnknownScalesMissingMiddleObservations}. 
    This is a 32 dimensional hierarchical Bayesian model where Brownian motion with unknown innovation noise and measurement noise is fitted to the noisy and partially missing data. 
    
    \item The German Credit model, also known as Sparse logistic regression (\changefont{GermanCreditNumericSparseLogisticRegression}) is a 51-dimensional Bayesian hierarchical model, where logistic regression is used to model the approval of the credit based on the information about the applicant. 
    
    \item An Item Response theory model (\changefont{SyntheticItemResponseTheory}), which is a 501-dimensional hierarchical problem where students' ability is inferred, given the test results.
    
    \item Stochastic Volatility is a 2429-dimensional hierarchical non-Gaussian random walk fit to the S\&P500 returns data, adapted from NumPyro \citep{numpyro}.
    
\end{itemize}


The ground truth covariance matrix for the first two problems is known exactly. For the Rosenbrock function, we compute it by drawing exact samples from the posterior. For the other problems, we obtain the ground truth by running very long NUTS chains.

We note that even though some of these benchmark problems are the same as in \citep{hoffman_tuning-free_2022} and the evaluation metric is the same, the results cannot be compared, because \citet{hoffman_tuning-free_2022} run a large number of chains in parallel (4096 chains) and combines the samples at each fixed time to compute the expectation values. We on the other hand are interested in the more standard regime where one chain is sequentially run for a longer time and samples at different times are combined to compute the expectation values. This of course results in longer time to convergence, but lower total calculation cost. Furthermore NUTS in \citep{hoffman_tuning-free_2022} was adapted to the many-short-chains regime and is therefore not the same algorithm as a more standard NUTS used here.

\subsection{Lattice $\phi^4$ field theory} \label{sec:phi4}

This is an interacting lattice field theory on the plane. We will adopt its treatment from \citet{robnik_fluctuation_2024}.
In a continuum, the $\phi$ scalar function $\phi(x, y)$ on the plane. 
The probability density on the field configuration space is proportional to $e^{-S[\phi]}$, where the action is
\begin{equation} 
    S[\phi(x, y)] = \int \big( -\phi \, \partial^2 \phi + m^2 \phi^2 + \lambda \phi^4 \big) d x d y .
\end{equation}
The parameters of the theory are the mass $m^2 < 0$, and the quartic coupling $\lambda > 0$.

The system is interesting as it exhibits spontaneous symmetry breaking, and belongs 
to the same universality class
as the Ising model. 
It does not admit analytic solutions due to the quartic interaction term. It is numerically solved by discretizing the field on a lattice and making the lattice spacing as fine as possible \citep{LQCDGattringer}. The discretized field is specified by a vector of values on a lattice $\phi_{i j}$ for $i, j = 1, 2, \ldots L$. The dimensionality of the configuration space is $d = L^2$. We will impose periodic boundary conditions, such that $\phi_{i, L+1} = \phi_{i 1}$ and $\phi_{L+1, j} = \phi_{1 j}$. The lattice action is \citep{phi4masterthesis}
\begin{equation} 
    S_{\mathrm{lat}}[\phi] = \sum_{i, j= 1}^L 2 \phi_{ij} \big(2 \phi_{i j} - \phi_{i+1, j} - \phi_{i, j+1}\big) + m^2 \phi_{i j}^2  + \lambda \phi_{i j}^4.
\end{equation}
We will fix $m^2 = -4$ (which removes the diagonal terms $\phi_{ij}^2$ in the action) as is common \citep{, phi4notebook, gerdes}. In all experiments we fix $\bar{\lambda} = 4$, where 
$\bar{\lambda} = L^{1/\nu} (\lambda - \lambda_C) / \lambda_C$. Here, 
$\nu = 1$ and $\lambda_C = 4.25$ are taken from \citep{phi4masterthesis}. This choice ensures that all of the experiments are performed at approximately the same location in the phase diagram in disordered phase, with minimal dependence on the lattice size \citep{gerdes, FiniteLatticeSize}.

As in the other experiments, we take the average squared bias of the second moments as the convergence metric,
$b^2_{\mathit{avg}} \equiv d^{-1} \sum_{k, l=1}^L b^2(\vert \widetilde{\phi}_{kl} \vert^2)$. The only difference is that we here consider the second moments of the Fourier transform $\widetilde{\phi}$ instead of the field itself. The Fourier transform is defined as
 $   \widetilde{\phi}_{k l}= L^{-1} \sum_{n m = 1}^L \phi_{n m} e^{-2 \pi i (k n + l m) / L}$.
We use only $2$ chains for $\phi^4$ at $1024^2$ dimensions, since this is what fits on a GPU.


\subsection{Lattice U(1) gauge theory}\label{sec:U1}

This is a quantum field theory of electrodynamics, such that the gauge symmetry is the U(1) group. It serves as a toy model of the quantum chromodynamics. We will adopt its treatment from \cite{phi4notebook, U1rezende} and study the theory in two Euclidean dimensions on a $L \times L$ lattice with the periodic boundary conditions. The degrees of freedom are the elements of the $U(1)$ group on each lattice link. The group elements are parametrized by the elements of its Lie algebra, such that an angle $\theta \in [-\pi , \pi)$ is assigned to each lattice link. The configuration space is then parametrized by a vector of link angles: $\Theta = \{ \theta^i_{m n} \vert 1 \leq m \leq L, \, 1 \leq n \leq L, \, i \in \{ t, x\} \}$. $m$ and $n$ indices determine the start of the link and $i$ its direction (time or space direction).
The probability density on the field configuration space $p(\Theta)$ is proportional to $e^{-S[\Theta]}$, where the action is
\begin{equation}
    S_{\text{lat}}[\Theta] = - \beta \sum_{m, n = 1}^{L} \cos \theta^{P}_{m n}.
\end{equation}
Here, the plaquette angle is 
$\theta^{P}_{m n} = \theta^t_{m n} + \theta^x_{m+1, n} - \theta^t_{m, n+1} - \theta^x_{m n}$
and the inverse temperature $\beta$ is a free parameter of the theory. In this experiment we set it to $\beta = 2$.

Only gauge invariant summary statistics make physical sense for a gauge theory, so we will not consider $\mathbb{E}[(\vartheta_{mn}^i)^2]$ for this problem. We instead study closed loops, which are gauge invariant. One natural choice are the Polyakov loops, i.e., closed loops over the Euclidean time direction:
\begin{align}
    P_n(\Theta) = \exp\{i  \sum_{m = 1}^{L} \vartheta^t_{m, n} \}.
\end{align}
Polyakov loops are of direct physical interest because their autocorrelation can be used to infer the static potential $V$ of the quantum field \citep{LQCDGattringer}:
\begin{equation} \label{eq: autocorr}
    \mathbb{E}_{\Theta \sim p}[ P_n(\Theta) P_0(\Theta) ] \propto e^{- L a V(a n)}.
\end{equation}
Here $a$ is the lattice spacing in physical units and the expectation value is over the field configurations.
We define the average bias as $b_{\mathit{avg}}^2 = L^{-1} \sum_{n = 1}^L b^2(P_n(\Theta))$.

\begin{table}[]
    \centering
    {\begin{tabular}{ccc|cc|c|c}
    \toprule & \textbf{aLMC} & \textbf{uLMC} & \textbf{aMCLMC} & \textbf{uMCLMC} &  \textbf{uLMC}  & \textbf{NUTS} \\
    & \textit{grid search} & \textit{EEVPD} & \textit{acc. rate} & \textit{EEVPD} & \textit{grid search} & \textit{acc. rate} \\
    \midrule
    Standard Gaussian & 79,841 & \textbf{64,254} & 43,389 & \textbf{26,032} & 65,604 & 240,456   \\ 
    Rosenbrock & 659,359 & \textbf{415,988} & 540866 & \textbf{348,048} & 271,825 & 852,135    \\
    Brownian & \textbf{93,787}  & 112,242 & 76,931 & \textbf{41,838} & 73,632 & 146,333   \\
    German Credit & 462,843  & \textbf{380,569} & 462,770 & \textbf{249,674} & 306904  & 756,792  \\
    \end{tabular}}
    \caption{Number of gradient calls needed to get $b^2_\mathit{cov}$ below 0.01.}
    \label{table:results2}
\end{table}

\begin{table} 
\centering
\begin{tabular}{llrrrrr}
\toprule
 &  & aLMC & uLMC & aMCLMC & uMCLMC & NUTS \\
Model & metric &  &  &  &  &  \\
\midrule
\multirow[t]{2}{*}{\textbf{Standard Gaussian}} & \textbf{$b_\mathit{avg}$} & 0.42\% & 1.02\% & 1.51\% & 0.77\% & 0.13\% \\
\textbf{} & \textbf{$b_\mathit{cov}$} & 0.06\% & 0.32\% & 0.31\% & 0.29\% & 0.06\% \\
\cline{1-7}
\multirow[t]{2}{*}{\textbf{Rosenbrock}} & \textbf{$b_\mathit{avg}$} & 0.07\% & 5.69\% & 0.06\% & 2.05\% & 0.02\% \\
\textbf{} & \textbf{$b_\mathit{cov}$} & 0.03\% & 1.64\% & 0.01\% & 0.90\% & 0.01\% \\
\cline{1-7}
\multirow[t]{2}{*}{\textbf{Brownian Motion}} & \textbf{$b_\mathit{avg}$} & 0.37\% & 6.14\% & 0.38\% & 2.77\% & 0.16\% \\
\textbf{} & \textbf{$b_\mathit{cov}$} & 0.05\% & 4.96\% & 0.06\% & 0.49\% & 0.03\% \\
\cline{1-7}
\multirow[t]{2}{*}{\textbf{German Credit}} & \textbf{$b_\mathit{avg}$} & 0.06\% & 7.49\% & 0.10\% & 2.25\% & 0.04\% \\
\textbf{} & \textbf{$b_\mathit{cov}$} & 0.05\% & 11.58\% & 0.08\% & 0.93\% & 0.05\% \\
\cline{1-7}
\textbf{Item Response} & \textbf{$b_\mathit{avg}$} & 0.62\% & 8.08\% & 1.23\% & 4.11\% & 0.44\% \\
\cline{1-7}
\textbf{Stochastic Volatility} & \textbf{$b_\mathit{avg}$} & 0.04\% & 8.88\% & 0.06\% & 1.87\% & 0.02\% \\
\bottomrule
\end{tabular}
\caption{Relative error associated with Tables \ref{table:results} and \ref{table:results2}.}
\label{table:bias_comparison}
\end{table}

\section{Further Experiments}

\subsection{Covariance matrix bias}

Table \ref{table:results2} shows the convergence cost for different samplers with the $b_{\mathit{cov}}$ metric, analogously to Table \ref{table:results}. Here we omit the higher dimensional problems (Item response and Stochastic Volatility), because their covariance matrix is impractically large.
Qualitatively the conclusions are as in Section \ref{sec: experiments}: 
\begin{itemize}
    \item Unadjusted versions of the algorithms tend to perform better than their adjusted counterparts.
    \item EEVPD based tuning yields close to optimal performance when compared to grid search. An exception is the Rosenbrock distribution, where the conservative choice of EEVPD yields some decrease in performance, but the unadjusted algorithm still performs better than the adjusted one. 
    \item MCLMC performs better than LMC.
\end{itemize}
A notable difference is that all samplers need a larger number of gradient evaluations here, because $b_{\mathit{cov}}$ is a more stringent metric than $b_{\mathit{avg}}$.

\subsection{Ablation study}
\label{ablation}

Here, we investigate how performance of unadjusted LMC varies with the $\mathrm{EEVPD}$ value being targeted. Figure \ref{fig:ablation} shows that our choice of $3 \times 10^{-4}$ is within the safe range for problems that we consider, albeit it is somewhat conservative for the Item Response and Rosenbrock problems.


\subsection{Uncertainity} 
\label{sec: tuningdetails}

The errors corresponding to Tables \ref{table:results} and \ref{table:results2} are shown in Table \ref{table:bias_comparison}. They are calculated by bootstrap: for a given model, we produce a set of chains (at least 128), and calculate the bias $b_\mathit{avg}$ or $b_\mathit{cov}$ at each step of the chain. We then resample (with replacement) 100 times from this set, and compute our final metric (number of gradients to low bias) 100 times. We take the standard deviation of this list of length 100 as the estimate of the error and report it relatively to the values in Tables \ref{table:results} and \ref{table:results2}.

\begin{figure}
    \centering
    \includegraphics[width=0.85\linewidth]{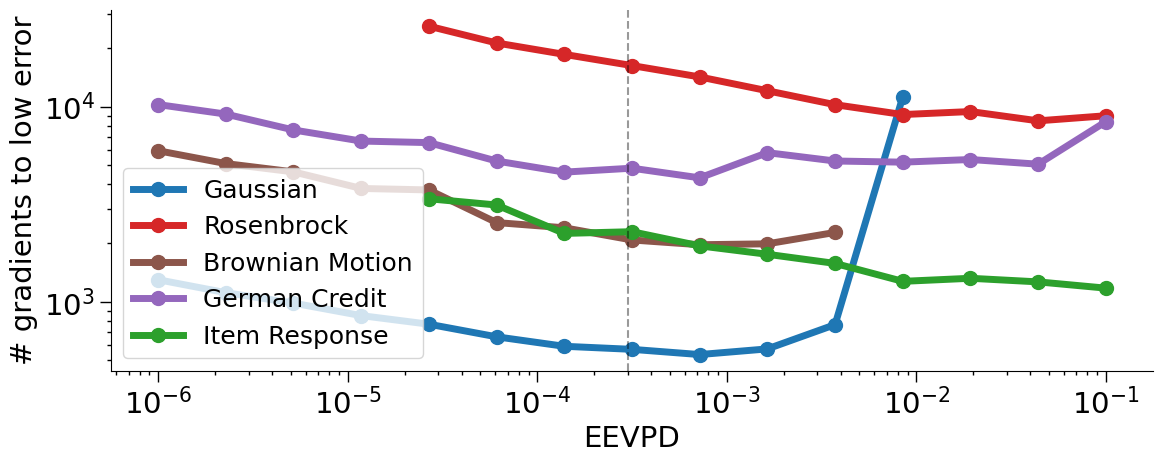}
    \caption{Performance of unadjusted LMC as a function of EEVPD (which in turn sets the step size). As can be seen, performance does not change much within a resonable range of EEVPD. The value $\mathrm{EEVPD} = 3 \times 10^{-4}$ that we use in Section \ref{sec: experiments} is shown as a vertical dashed line.}
    \label{fig:ablation}
\end{figure}


\section{Reproducibility} \label{github}

\paragraph{Code} All the samplers considered and their tuning schemes are implemented in BlackJax \citep{blackjax}, which  is publicly available: \url{https://blackjax-devs.github.io/blackjax/}. General purpose benchmarking code and experiments in this paper are also publicly available \url{https://github.com/reubenharry/sampler-benchmarks}.

\paragraph{Computing architecture} $\phi^4$ field theory example was run on the NVIDIA A100 GPU (40GB). The other experiments were run on 128 CPU cores, where each core is a 2x AMD EPYC 7763 (Milan) CPU.

\end{document}